\newtheorem{proposition}{Proposition}
\begin{document}
%
% paper title
% Titles are generally capitalized except for words such as a, an, and, as,
% at, but, by, for, in, nor, of, on, or, the, to and up, which are usually
% not capitalized unless they are the first or last word of the title.
% Linebreaks \\ can be used within to get better formatting as desired.
% Do not put math or special symbols in the title.
\title{Interfering Channel Estimation in Radar-Cellular Coexistence: How Much Information Do We Need?}
%
% note positions of commas and nonbreaking spaces ( ~ ) LaTeX will not break
% a structure at a ~ so this keeps an author's name from being broken across
% two lines.
% use \thanks{} to gain access to the first footnote area
% a separate \thanks must be used for each paragraph as LaTeX2e's \thanks
% was not built to handle multiple paragraphs
%

\author{Fan Liu,~\IEEEmembership{Student~Member,~IEEE,}
        Adrian Garcia-Rodriguez,~\IEEEmembership{Member,~IEEE,}\\
        Christos Masouros,~\IEEEmembership{Senior~Member,~IEEE,}
        and~Giovanni Geraci,~\IEEEmembership{Member,~IEEE}% <-this % stops a space
\thanks{This work has been submitted to the IEEE for possible publication. Copyright may be transferred without notice, after which this version may no longer be accessible.}
\thanks{F. Liu and C. Masouros are with the Department of Electronic and Electrical Engineering, University College London, London, WC1E 7JE, UK. (e-mail: fan.liu@ucl.ac.uk, chris.masouros@ieee.org).}
\thanks{A. Garcia-Rodriguez and G. Geraci are with Nokia Bell Labs, Dublin 15, Ireland. (e-mail: a.garciarodriguez.2013@ieee.org, dr.giovanni.geraci@gmail.com).}
}

\maketitle
% As a general rule, do not put math, special symbols or citations
% in the abstract or keywords.
\begin{abstract}
In this paper, we focus on the coexistence between a MIMO radar and cellular base stations. We study the interfering channel estimation, where the radar is operated in the ``search and track" mode, and the BS receives the interference from the radar. Unlike the conventional methods where the radar and the cellular systems fully cooperate with each other, in this work we consider that they are uncoordinated and the BS needs to acquire the interfering channel state information (ICSI) by exploiting the radar probing waveforms. For completeness, both the line-of-sight (LoS) and Non-LoS (NLoS) channels are considered in the coexistence scenario. By further assuming that the BS has limited a priori knowledge about the radar waveforms, we propose several hypothesis testing methods to identify the working mode of the radar, and then obtain the ICSI through a variety of channel estimation schemes. Based on the statistical theory, we analyze the theoretical performance of both the hypothesis testing and the channel estimation methods. Finally, simulation results verify the effectiveness of our theoretical analysis and demonstrate that the BS can effectively estimate the interfering channel even with limited information from the radar.
\end{abstract}

% Note that keywords are not normally used for peerreview papers.
\begin{IEEEkeywords}
Radar interfering channel estimation, Radar-communication coexistence, spectrum sharing, hypothesis testing, search and track.
\end{IEEEkeywords}

% For peer review papers, you can put extra information on the cover
% page as needed:
% \ifCLASSOPTIONpeerreview
% \begin{center} \bfseries EDICS Category: 3-BBND \end{center}
% \fi
%
% For peerreview papers, this IEEEtran command inserts a page break and
% creates the second title. It will be ignored for other modes.

\section{Introduction}
\IEEEPARstart {R}{ecent} years have witnessed an explosive growth of wireless services and devices. As a consequence, the frequency spectrum has become one of the most valuable resources. Since 2015, mobile network operators in the UK have been required to pay a combined annual total of \pounds 80.3 million for the 900MHz and \pounds 119.3 million for the 1800MHz bands \cite{UK_spectrum}. Given the crowdedness within the sub-10GHz band, policy regulators and network providers are now seeking for the opportunity to reuse spectrum currently restricted to other applications. Indeed, the frequency bands occupied for radar are among the best candidates to be shared among various communication systems in the near future \cite{federal2010connecting,ears,ssparc,specees}.
\subsection{Existing Approaches}
Aiming for realizing the spectral coexistence of radar and communication,  existing contributions mainly focus on mitigating the mutual interference between the two systems by use of precoding/beamforming techniques \cite{6817773,sodagari2012projection,babaei2013nullspace,khawar2015target,mahal2017spectral}. Such efforts can be found in the pioneering work of \cite{sodagari2012projection}, in which the radar signals are precoded by a so-called null-space projector (NSP), and thus the interference generated to the communication systems is zero-forced. To achieve a favorable performance trade-off, the NSP method is further improved in \cite{khawar2015target,mahal2017spectral} via Singular Value Decomposition (SVD), where the interference level can be adjusted considering the singular values of the channel matrix.
\\\indent As a step further, more recent works have exploited convex optimization techniques for jointly designing transmit waveforms/precoders of radar and communication systems, such that certain performance metrics can be optimized \cite{li2016mimo,li2016optimum,li2017joint,zheng2018joint,8352726,8233171,7898445,8254100,8355705,8351931}. For instance, in \cite{li2017joint}, the receive signal-to-interference-plus-noise ratio (SINR) of the radar is maximized in the presence of both the clutters and the communication interference, while the capacity of the communication system is guaranteed. The inverse problem has been tackled in \cite{zheng2018joint}, where the communication rate has been maximized subject to the radar SINR constraint, as well as the power budgets for both systems. While the aforementioned works are well-designed via sophisticated techniques, it is in general difficult for them to be applied to current radar applications, given the fact that the governmental and military agencies are unwilling to make major changes in their radar deployments, which may impose huge costs on their financial budgets \cite{UK_radar_plannning}. Hence, a more practical approach is to develop transmission schemes at the communication side only, where the radar is agnostic to the interference or even the operation of the communication system. In this line, [16] considers the coexistence between a MIMO radar and a BS performing multi-user MIMO (MU-MIMO) downlink transmissions, in which the precoder of the BS is the only optimization variable. In \cite{8254100,8355705}, the BS precoder has been further developed by exploiting the constructive multi-user interference, which demonstrates orders-of-magnitude power-savings.
\\\indent It is worth highlighting that precoding based techniques require the knowledge of the interfering channel either at the radar or the communication BS. In fact, perfect/imperfect channel state information (CSI) assumptions are quite typical in the above works. To obtain such information, the radar and the BS are supposed to fully cooperate with each other and transmit training symbols, in line with conventional channel estimation methods. In \cite{mahal2017spectral}, the MIMO radar needs to estimate the channel based on the received pilot signals sent by the BS, which inevitably occupies extra computational and signaling resources. Other works such as \cite{li2017joint} require an all-in-one control center to be connected to both systems via a dedicated side information link, which conducts the information exchange and the waveform optimization. In practical scenarios, however, the control center brings forward considerable complexity in the system design, and is thus difficult to implement. Moreover, since it is the cellular operator who exploits the spectrum of the radar, it is the performance of the latter that should be primarily guaranteed, i.e., the radar resources should be allocated to target detection rather than obtaining the CSI. Unfortunately, many existing contributions failed to address this issue, and, to the best of our knowledge, the channel estimation approaches tailored for the radar-cellular coexistence scenarios remain widely unexplored. In light of the above drawbacks regarding the CSI acquisition, the natural question is, 1) \emph{is it possible to estimate the channel when there is limited cooperation between the radar and the communication systems?} And if so, 2) \emph{how much information do we need for the estimation?}

\subsection{The Contribution of Our Work}
This paper aims at answering the above issues, where we focus on interfering channel estimation between a MIMO radar and a MIMO BS:
\begin{enumerate}
 \item To cope with the first issue above, we hereby propose to exploit the radar probing waveforms for estimating the interfering channel. In this case the radar does not need to send training symbols or estimate the channel by itself, and thus the need for cooperation is fully eliminated. Following the classic MIMO radar literature \cite{4350230,6324717}, we assume that the radar has two working modes, i.e., searching and tracking. In the search mode, the radar transmits a spatially orthogonal waveform, which formulates an omni-directional beampattern for searching potential targets over the whole angular domain. In the track mode, the radar transmits directional waveforms to track the target located at the angle of interest, and thus to obtain a more accurate observation. In the meantime, the BS is trying to estimate the channel based on the periodically received radar interference, which is tied to the radar's duty cycle. As the searching and tracking waveforms are randomly transmitted, we propose to identify the operation mode of the radar by use of the hypothesis testing approach, and then estimate the channel at the BS.
 \item To answer the second question raised above, we further investigate different cases under both LoS and NLoS channels, where different levels of priori knowledge about the radar waveforms are assumed to be known at the BS, i.e., from full knowledge of searching and tracking waveforms by the BS, to knowledge of searching waveform only, to a fully agnostic BS to the radar waveforms. From a realistic perspective, the second and the third cases are more likely to appear in practice while the first case serves as a performance benchmark. Accordingly, the theoretical performance analysis of the proposed approaches are provided.
 \end{enumerate}

For the sake of clarity, we summarize below the contributions of this paper:
\begin{enumerate}
\item We consider the interfering channel estimation for the coexistence of radar and cellular, where the radar probing waveforms are exploited to obtain ICSI at the BS.
\item We propose hypothesis testing approaches for the BS to identify the operation mode of the radar, based on the limited priori information available at the BS.
\item We analyze the theoretical performance of the proposed detectors and estimators, whose effectiveness is further verified via numerical results.
\end{enumerate}

The remainder of this paper is arranged as follows. Section II introduces the system model, Section III and Section IV propose interfering channel estimation approaches for NLoS and LoS scenarios, respectively. Subsequently, Section V analyzes the theoretical performance of the proposed schemes, Section VI provides the corresponding numerical results, and finally Section VII concludes the paper.
\\\indent {\emph{Notations}}: Unless otherwise specified, matrices are denoted by bold uppercase letters (i.e., $\mathbf{X}$), vectors are represented by bold lowercase letters (i.e., $\mathbf{z}$), and scalars are denoted by normal font (i.e., $\rho$). $\operatorname{tr}\left(\cdot\right)$ and $\operatorname{vec}\left(\cdot\right)$ denote the trace and the vectorization operations. $\otimes$ denotes the Kronecker product. $\left\| \cdot\right\|$ and $\left\| \cdot\right\|_F$ denote the $l_2$ norm and the Frobenius norm. $\left(\cdot\right)^T$, $\left(\cdot\right)^H$, and $\left(\cdot\right)^*$ stand for transpose, Hermitian transpose and complex conjugate, respectively.
\begin{figure}[!tp]
\centering
\subfloat[]{\includegraphics[width=0.9\columnwidth]{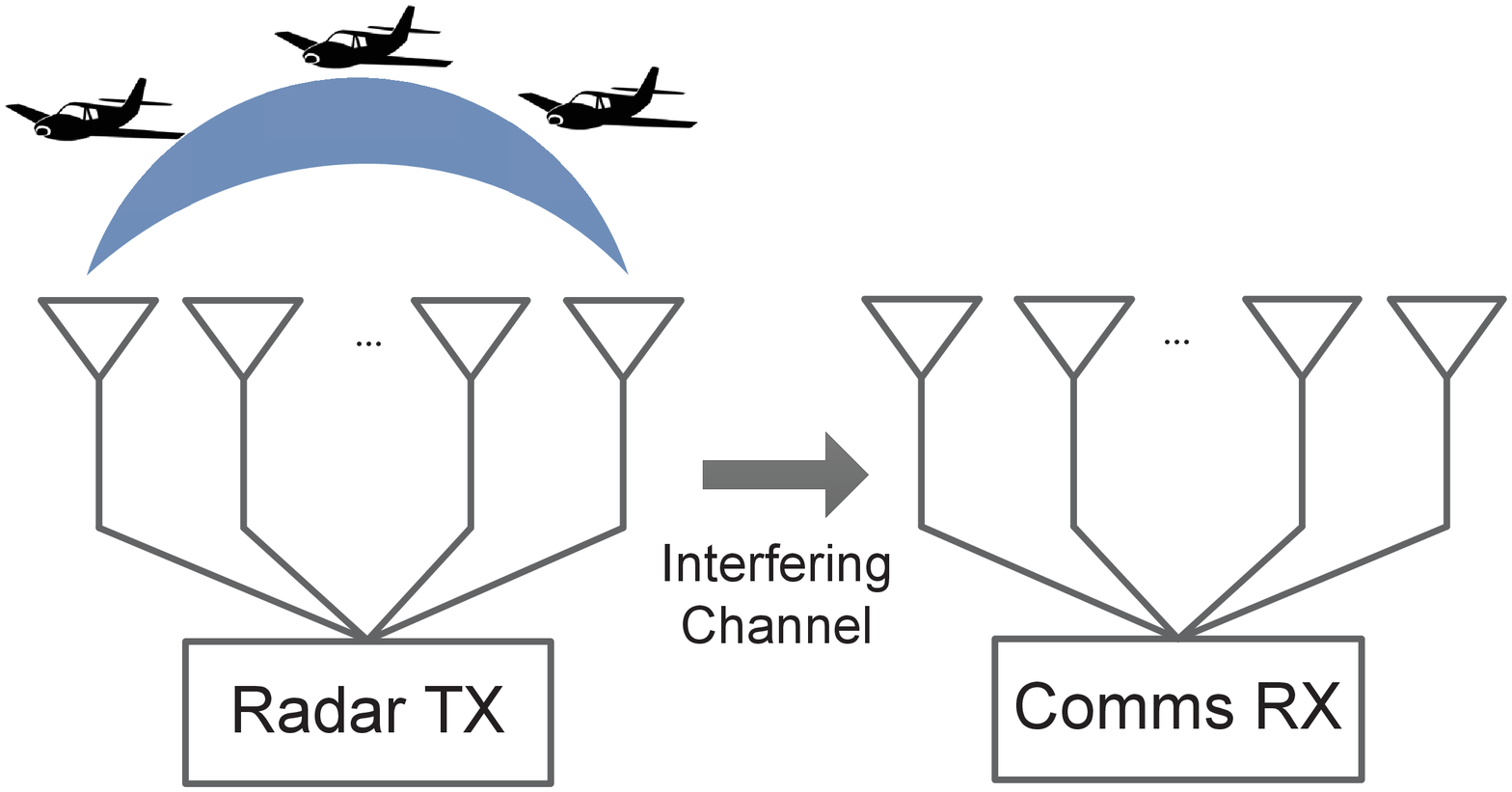}
\label{Radar_Search}}
\vspace{0.1in}
\hspace{.1in}
\subfloat[]{\includegraphics[width=0.9\columnwidth]{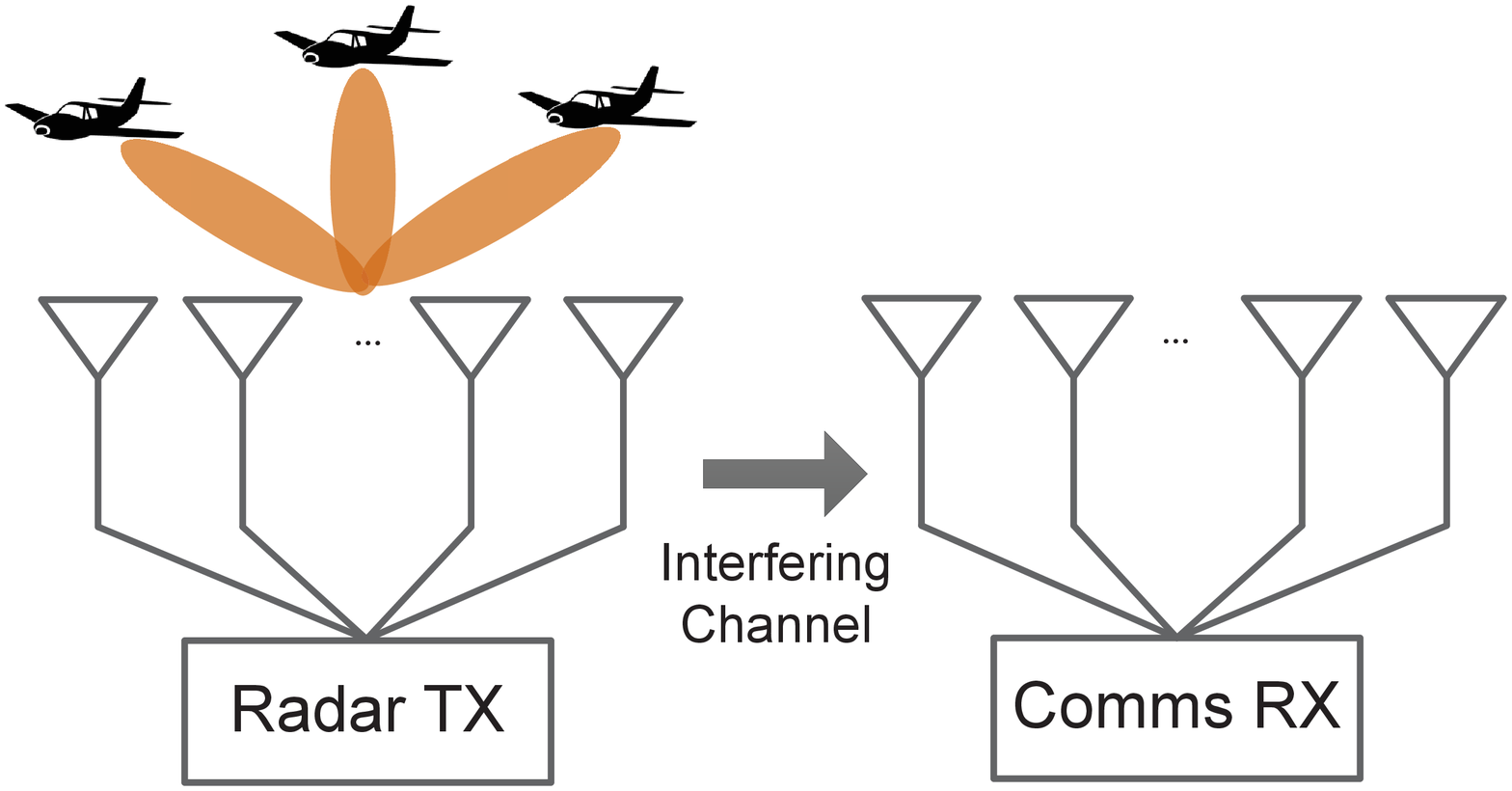}
\label{Radar_Track}}
\caption{MIMO Radar and BS coexistence. (a) Radar search mode; (b) Radar track mode.}
\label{fig_projector}
\end{figure}
% The very first letter is a 2 line initial drop letter followed
% by the rest of the first word in caps.
%
% form to use if the first word consists of a single letter:
% \IEEEPARstart{A}{demo} file is ....
%
% form to use if you need the single drop letter followed by
% normal text (unknown if ever used by the IEEE):
% \IEEEPARstart{A}{}demo file is ....
%
% Some journals put the first two words in caps:
% \IEEEPARstart{T}{his demo} file is ....
%
% Here we have the typical use of a "T" for an initial drop letter
% and "HIS" in caps to complete the first word.

% You must have at least 2 lines in the paragraph with the drop letter
% (should never be an issue)

% needed in second column of first page if using \IEEEpubid
%\IEEEpubidadjcol
\section{System Model}
As shown in Fig. 1, we consider a MIMO radar with $M_t$ transmit antennas and $M_r$ receive antennas that is detecting targets located in the far field. For simplicity, we assume that the MIMO radar employs the same antenna array for both transmission and reception, and denote ${M_t} = {M_r} = M$. Meanwhile, an \emph{N}-antenna BS operating in the same frequency band is receiving interference from the radar and trying to acquire the ICSI between them. Below we provide the system models for both the radar and the BS.
\subsection{Radar Signal Transmission - Search and Track}
It is widely known that by employing incoherent waveforms, the MIMO radar achieves higher Degrees of Freedom (DoFs) and better performance than the conventional phased-array radar that transmits correlated waveforms\cite{4350230}. By denoting the MIMO radar probing waveform as $\mathbf{X}\in \mathbb{C}^{M\times L}$, its spatial covariance matrix can be given as \cite{4350230,li2008mimo,6324717,4276989,4516997}
\begin{equation}\label{eq1}
    {{\mathbf{R}}_X} = \frac{1}{L}{\mathbf{X}}{{\mathbf{X}}^H},
\end{equation}
where $L$ is the length of the radar pulse. Throughout the paper we consider $L\ge N \ge M >2$, and assume uniform linear arrays (ULA) at both the radar and the BS. The corresponding beampattern can be thus given in the form \cite{4350230,li2008mimo,6324717,4276989,4516997}
\begin{equation}\label{eq2}
    {P_d}\left( \theta  \right) = {{\mathbf{a}}^H}\left( \theta  \right){{\mathbf{R}}_X}{\mathbf{a}}\left( \theta  \right),
\end{equation}
where $\theta$ denotes the azimuth angle,  and ${\mathbf{a}}\left( \theta \right) = \left[ {1,{e^{j2\pi \Delta \sin \left( \theta  \right)}},...,{e^{j2\pi \left( {M - 1} \right)\Delta \sin \left( \theta  \right)}}} \right]^T \in {\mathbb{C}^{{M} \times 1}}$ is the steering vector of the transmit antenna array with $\Delta$ being the spacing between adjacent antennas normalized by the wavelength.
\\\indent When the orthogonal waveform is transmitted by the MIMO radar, it follows that\cite{li2008mimo,7214226}
\begin{equation}\label{eq3}
    {{\mathbf{R}}_X} = \frac{{{P_R}}}{M}{{\mathbf{I}}_M},
\end{equation}
where $P_R$ is the total transmit power of the radar, and ${{\mathbf{I}}_M}$ is the \emph{M}-dimensional identity matrix.
\begin{figure}[!tp]
 \centering
 \includegraphics[width=0.9\columnwidth]{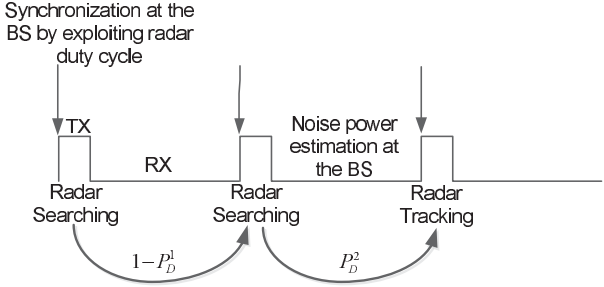}
 %\hspace{2cm}
 \caption{Radar working mode - ``search and track'' and operations performed by the communications BS.}
 \label{fig:FIG22}
\end{figure}
It is easy to see from (\ref{eq2}) that the covariance matrix (\ref{eq3}) generates an omni-directional beampattern, which is typically used for \emph{searching} when there is limited information about the target locations\cite{4350230}. Once a target is detected, the radar switches to the \emph{tracking} mode, where it will no longer transmit orthogonal waveforms and will generate a directional beampattern that points to the specific location, thus obtaining a more accurate observation. This, however, results in a non-orthogonal transmission, i.e., $ {{\mathbf{R}}_X} \ne \frac{{{P_R}}}{M}{{\mathbf{I}}_M}$. In this paper, we assume that the radar adopts both the searching and tracking modes subject to a probability transition model. This model is illustrated in Fig. 2 and can be summarized as follows \cite{4207172}:
\\\indent \emph{Assumption 1}: At the $i$-th pulse repetition interval (PRI) of the radar, the probability that the radar is operating at the tracking mode is $P_D^{\left(i-1\right)}$, where $P_D^{\left(i-1\right)}$ is the target detection probability of the $(i-1)$-th PRI.
\\\indent The above assumption entails that the MIMO radar changes its probing waveform $\mathbf{X}$ \emph{randomly} within each PRI, which makes it challenging for the BS to estimate the interfering channel between them.

\subsection{Interfering Channel Model}
The interfering channel between the BS and the radar could be characterized through different models, depending on their specific positions. For instance, the military and weather radars are typically located at high-altitude places such as top of the hills, in which case the channel between the BS and radar is likely to be a Line-of-Sight (LoS) channel. On the other hand, if the radar is used for monitoring the low-altitude flying objects (such as drones) or the urban traffic, it is usually deployed in urban areas at similar heights than the BS, thus resulting in a Non-Line-of-Sight (NLoS) channel. For completeness, we will discuss both cases in this paper. Since both the radar and the BS are located in fixed positions, we also adopt the following assumption:
\\\indent \emph{Assumption 2}: For the LoS coexistence scenario, we assume that the interfering channel from the radar to the BS is fixed. For the NLoS coexistence scenario, we assume the interfering channel is flat Rayleigh fading, and remains unchanged during several radar PRIs.

\subsection{BS Signal Reception Model}
Denoting the interfering channel as $\mathbf{G} \in \mathbb{C}^{N\times M}$, the received signal matrix at the BS can be given as
\begin{equation}\label{eq4}
    {\mathbf{Y}} = {\mathbf{G}}{{\mathbf{X}}} + {\mathbf{W}},
\end{equation}
where ${\mathbf{W}} = \left[ {{{\mathbf{w}}_1},{{\mathbf{w}}_2},...,{{\mathbf{w}}_L}} \right] \in {\mathbb{C}^{N \times L}}$ is the noise matrix, with ${{\mathbf{w}}_l} \sim \mathcal{C}\mathcal{N}\left( {{\mathbf{0}},{N_0}{{\mathbf{I}}_N}} \right),\forall l$. In the proposed hypothesis testing framework, the noise power $N_0$ plays an important role for normalizing the test statistic. Note that when radar keeps silent, the BS will receive nothing but the noise, and $N_0$ can be measured at this stage. Since the radar antenna number and its transmit power are fixed parameters, they can also be easily known to the BS operators. Therefore, it is reasonable to adopt the following assumption:
\\\indent \emph{Assumption 3}: The BS knows the value of $N_0$, $M$ and $P_R$.
\\\indent In order to estimate the channel and the noise power $N_0$, the BS needs to know when is radar transmitting, i.e., it must synchronize its clock with the radar pulses. As shown in Fig. 2, during one PRI, the radar only transmits for a portion of the time, typically below 10\%, and employs the remaining 90\% for receiving, during which the radar remains silent. Such a ratio is called \emph{duty cycle} \cite{zheng2018joint}. By exploiting this property, the BS is able to blindly estimate the beginning and the end of a radar pulse by some simple methods, such as energy detection. Note that for the NLoS channel scenario, there will be random time-spread delays within each pulse, which makes the synchronization inaccurate. However, since we assume a flat fading channel in the NLoS case, the time-spread delay will be contained within one snapshot of the radar, which results in negligible errors \cite{Tse2009Fundamentals}. We summarize the above through the following assumption:
\\\indent \emph{Assumption 4}: The BS can perfectly synchronize its clock with the radar pulses, i.e., it is able to know the beginning and the end of each radar pulse.

\subsection{Channel Estimation Procedure}
In light of the above discussion, we summarize below the channel estimation procedure at the BS:
\begin{enumerate}
\item Synchronize the system clock with the radar transmitted pulses.
\item Identify the working mode of the radar based on the received radar interference, i.e., whether the radar is searching or tracking.
\item Estimate the interfering channel by exploiting the limited knowledge about the radar waveforms.
\end{enumerate}
In the following, we will develop several approaches for the BS to acquire the ICSI when radar is randomly changing its probing waveform. We will first consider the NLoS channel case, and then the LoS channel case.

\section{NLoS Channel Scenario}
Consider the ideal case where the BS knows exactly the waveform sent by the radar in each PRI. Recalling (\ref{eq4}), the well-known maximum likelihood estimation (MLE) of the channel $\mathbf{G}$ is given as \cite{kay1998fundamentals}
\begin{equation}\label{eq5}
{\mathbf{\hat G}} = {\mathbf{YX}}^H{\left( {{{\mathbf{X}}}{\mathbf{X}}^H} \right)^{ - 1}},
\end{equation}
which is also known as the Least-Squares estimation (LSE) for ${\mathbf{G}}$. Unfortunately, the BS is not able to identify which waveform is transmitted, since the radar changes its waveform randomly at each PRI. Hence, (\ref{eq5}) can not be directly applied and it is difficult to estimate the channel directly. In this section, we discuss several cases where different levels of knowledge about the radar waveforms are available at the BS. At each level, we propose specifically tailored approaches to acquire the ICSI.
\subsection{BS Knows the Searching and Tracking Waveforms - Generalized Likelihood Ratio Test (GLRT)}
In this reference case, we assume that the BS knows both the searching and the tracking waveforms that the radar may transmit at the \emph{i}-th PRI, which we denote as ${\mathbf{X}}_0$ and ${\mathbf{X}}_1$, respectively. Since ${\mathbf{X}}_0$ is orthogonal, we have
\begin{equation}\label{eq6}
\frac{1}{L}{{\mathbf{X}}_0}{\mathbf{X}}_0^H = \frac{{{P_R}}}{M}{{\mathbf{I}}_M} \Rightarrow {{\mathbf{X}}_0}{\mathbf{X}}_0^H = \frac{{L{P_R}}}{M}{{\mathbf{I}}_M}.
\end{equation}
Before estimating the channel, the BS needs to identify which waveform is transmitted based on the received noisy data ${\mathbf{Y}}\in \mathbb{C}^{N \times L}$. This leads to the following hypothesis testing (HT) problem \cite{kay1998fundamentals_detection}
\begin{equation}\label{eq7}
{\mathbf{Y}} = \left\{ \begin{gathered}
  {\mathcal{H}_0}:{\mathbf{G}}{{\mathbf{X}}_0} + {\mathbf{W}}, \hfill \\
  {\mathcal{H}_1}:{\mathbf{G}}{{\mathbf{X}}_1} + {\mathbf{W}}. \hfill \\
\end{gathered}  \right.
\end{equation}
As per Assumption 1, the priori probabilities of the above two hypotheses can be given as
\begin{equation}\label{eq8}
P\left( {{\mathcal{H}_0}} \right) = 1 - P_D^{\left(i-1\right)},P\left( {{\mathcal{H}_1}} \right) = P_D^{\left(i-1\right)}.
\end{equation}
The HT problem (\ref{eq7}) can be solved via the generalized likelihood ratio test (GLRT), which is given by \cite{kay1998fundamentals_detection}
\begin{equation}\label{eq9}
\begin{gathered}
  {L_G}\left( {\mathbf{Y}} \right) = \frac{{p\left( {{\mathbf{Y}};{{{\mathbf{\hat G}}}_1},{\mathcal{H}_1}} \right)P\left( {{\mathcal{H}_1}} \right)}}{{p\left( {{\mathbf{Y}};{{{\mathbf{\hat G}}}_0},{\mathcal{H}_0}} \right)P\left( {{\mathcal{H}_0}} \right)}} \\
   = \frac{{p\left( {{\mathbf{Y}};{{{\mathbf{\hat G}}}_1},{\mathcal{H}_1}} \right)P_D^{\left(i-1\right)}}}{{p\left( {{\mathbf{Y}};{{{\mathbf{\hat G}}}_0},{\mathcal{H}_0}} \right)\left( {1 - P_D^{\left(i-1\right)}} \right)}}\mathop  \gtrless \limits_{{\mathcal{H}_0}}^{{\mathcal{H}_1}} \gamma, \\
\end{gathered}
\end{equation}
where $\gamma$ is the detection threshold, $p\left( {{\mathbf{Y}};{{{\mathbf{\hat G}}}},{\mathcal{H}_1}} \right)$ and $p\left( {{\mathbf{Y}};{{{\mathbf{\hat G}}}},{\mathcal{H}_0}} \right)$ are the likelihood functions (LFs), for the two hypotheses respectively, and can be given in the form
\begin{equation}\label{eq10}
\begin{gathered}
  p\left( {{\mathbf{Y}};{\mathbf{\hat G}},{\mathcal{H}_0}} \right) \hfill \\
   = {\left( {\pi {N_0}} \right)^{ - NL}}\exp \left( { - \frac{1}{{{N_0}}}\operatorname{tr} \left( {{{\left( {{\mathbf{Y}} - {\mathbf{\hat G}}{{\mathbf{X}}_0}} \right)}^H}\left( {{\mathbf{Y}} - {\mathbf{\hat G}}{{\mathbf{X}}_0}} \right)} \right)} \right), \hfill \\
\end{gathered}
\end{equation}
\begin{equation}\label{eq11}
\begin{gathered}
p\left( {{\mathbf{Y}};{\mathbf{\hat G}},{\mathcal{H}_1}} \right) \hfill \\
   = {\left( {\pi {N_0}} \right)^{ - NL}}\exp \left( { - \frac{1}{{{N_0}}}\operatorname{tr} \left( {{{\left( {{\mathbf{Y}} - {\mathbf{\hat G}}{{\mathbf{X}}_1}} \right)}^H}\left( {{\mathbf{Y}} - {\mathbf{\hat G}}{{\mathbf{X}}_1}} \right)} \right)} \right). \hfill \\
\end{gathered}
\end{equation}
In the above expressions, ${{{\mathbf{\hat G}}}_1}$ and ${{{\mathbf{\hat G}}}_0}$ are the MLEs under ${\mathcal{H}}_1$ and ${\mathcal{H}}_0$, which are obtained as
\begin{equation}\label{eq12}
{{\mathbf{\hat G}}_1} = {\mathbf{YX}}_1^H{\left( {{{\mathbf{X}}_1}{\mathbf{X}}_1^H} \right)^{ - 1}},
\end{equation}
\begin{equation}\label{eq13}
{{{\mathbf{\hat G}}}_0} = {\mathbf{YX}}_0^H{\left( {{{\mathbf{X}}_0}{\mathbf{X}}_0^H} \right)^{ - 1}} = \frac{M}{{L{P_R}}}{\mathbf{YX}}_0^H.
\end{equation}
Overall, once the BS determines which hypothesis to choose based on $\mathbf{Y}$, it can successfully estimate the channel by use of (\ref{eq12}) or (\ref{eq13}). However, it can be observed that the GLRT detector in (\ref{eq9}) requires the detection probability $P_D^{\left(i-1\right)}$ to be known to the BS, which is impossible in practice. Therefore, the detector (\ref{eq9}) can only serve as the \emph{optimal performance bound}. Since the actual $P_D^{\left(i-1\right)}$ is unknown to the BS, the reasonable choice for the priori probabilities is $P\left( {{\mathcal{H}_0}} \right) = P\left( {{\mathcal{H}_1}} \right) = 0.5$, namely $P_D^{\left(i-1\right)} = 0.5$. We can then apply the similar GLRT procedure for solving the HT problem. The test statistic in (\ref{eq9}) is thus simplified as
\begin{equation}\label{eq14}
{L_G}\left( {\mathbf{Y}} \right) = \frac{{p\left( {{\mathbf{Y}};{{{\mathbf{\hat G}}}_1},{\mathcal{H}_1}} \right)}}{{p\left( {{\mathbf{Y}};{{{\mathbf{\hat G}}}_0},{\mathcal{H}_0}} \right)}}\mathop  \gtrless \limits_{{\mathcal{H}_0}}^{{\mathcal{H}_1}} \gamma.
\end{equation}
\subsection{BS Knows Only the Searching Waveform - Rao Test}
In a realistic scenario, the tracking waveform ${\mathbf{X}}_1$ may vary from pulse to pulse. This is because the target to be detected may move very fast, which results in rapid changes in its parameters such as the distance, velocity and the azimuth angle. Hence, it is far from realistic to assume the BS knows ${\mathbf{X}}_1$, not to mention $P_D$ (in fact, both quantities are only determined after a target is detected). Nevertheless, as an omni-directional searching waveform, there is no reason for ${\mathbf{X}}_0$ to be changed rapidly. Indeed, in some cases, the radar may only use one waveform for omni-searching. Based on the above, to assume that the BS only knows ${\mathbf{X}}_0$ seems to be a more practical choice\footnote{At this stage we note the fact that such information exchange can be easily performed once prior to transmission, since the searching waveform of the radar remains unchanged. In contrast, conventional training based techniques require the radar or the BS to frequently send pilot symbols, which entails a much tighter cooperation between both systems.}. In this case, the HT problem (\ref{eq7}) can be recast as
\begin{equation}\label{eq15}
\begin{gathered}
  {\mathcal{H}_0}:{\mathbf{X}} = {{\mathbf{X}}_0},{\mathbf{G}}, \hfill \\
  {\mathcal{H}_1}:{\mathbf{X}} \ne {{\mathbf{X}}_0},{\mathbf{G}}. \hfill \\
\end{gathered}
\end{equation}
In (\ref{eq15}), the channel to be estimated is called the \emph{nuisance parameter} \cite{kay1998fundamentals_detection}.
\\\indent \emph{Remark 1:} At first glance, the GLRT procedure seems to be applicable to (\ref{eq15}) as well. However, note that to obtain the MLE of $\mathbf{G}$ under ${\mathcal{H}_1}$ is equivalent to solving the following optimization problem
\begin{equation}\label{eq16}
  \mathop {\min}\limits_{{\mathbf{G}},{\mathbf{X}}} \left\| {{\mathbf{Y}} - {\mathbf{GX}}} \right\|_F^2 \;\;s.t.\;\;\left\| {\mathbf{X}} \right\|_F^2 = L{P_R},\hfill \\
\end{equation}
where the constraint is to ensure the power budget of the radar-transmitted waveform. While the above problem is non-convex, it yields trivial solutions that achieve zero with a high probability. This is because the problem (\ref{eq16}) is likely to have more than enough DoFs to ensure that ${\mathbf{Y}} = {\mathbf{GX}}$, since $\mathbf{G}$ is unconstrained, and $\mathbf{X}$ can be always scaled to satisfy the norm constraint, where the scaling factor can be incorporated in $\mathbf{G}$. Therefore, the likelihood function under ${\mathcal{H}_1}$ will always be greater than that of ${\mathcal{H}_0}$, which makes the HT design meaningless.
\\\indent Realizing the fact above, we propose to use the Rao test (RT) to solve the HT problem (15), which does not need the MLE under ${\mathcal{H}_1}$. Based on \cite{7575639,liu2015rao,liu2014fisher}, let us define
\begin{equation}\label{eq17}
\begin{gathered}
  {\mathbf{\Theta }} = {\left[ {{{\operatorname{vec} }^T}\left( {\mathbf{X}} \right),{{\operatorname{vec} }^T}\left( {\mathbf{G}} \right)} \right]^T} \hfill \triangleq {\left[ {{\bm{\theta }}_r^T,{\bm{\theta }}_s^T} \right]^T}. \hfill \\
\end{gathered}
\end{equation}
Then, the RT statistic for the complex-valued parameters can be given in the form
\begin{equation}\label{eq18}
\begin{small}
\begin{gathered}
  {T_R}\left({{\mathbf{Y}}}\right)  \hfill \\
   = \left. {2\frac{{\partial \ln p\left( {{\mathbf{Y}};{\mathbf{\Theta }}} \right)}}{{\partial \operatorname{vec} \left( {\mathbf{X}} \right)}}} \right|_{{\mathbf{\Theta }} = {\mathbf{\tilde \Theta }}}^T{\left[ {{{\mathbf{J}}^{ - 1}}\left( {{\mathbf{\tilde \Theta }}} \right)} \right]_{{{\bm{\theta }}_r}{{\bm{\theta }}_r}}}{\left. {\frac{{\partial \ln p\left( {{\mathbf{Y}};{\mathbf{\Theta }}} \right)}}{{\partial {{\operatorname{vec} }^{\text{*}}}\left( {\mathbf{X}} \right)}}} \right|_{{\mathbf{\Theta }} = {\mathbf{\tilde \Theta }}}}\mathop  \gtrless \limits_{{H_0}}^{{H_1}} \gamma,  \hfill \\
\end{gathered}
\end{small}
\end{equation}
where ${\mathbf{\tilde \Theta }} = {\left[ {{\bm{\theta }}_r^T,{\bm{\hat \theta }}_s^T} \right]^T} = {\left[ {{{\operatorname{vec} }^T}\left( {{{\mathbf{X}}_0}} \right),{{\operatorname{vec} }^T}\left( {{{{\mathbf{\hat G}}}_0}} \right)} \right]^T}$ is the MLE under ${\mathcal{H}_0}$, and ${\left[ {{{\mathbf{J}}^{ - 1}}\left( {{\mathbf{\tilde \Theta }}} \right)} \right]_{{{\bm{\theta}} _r}{{\bm{\theta}} _r}}}$ is the upper-left partition of ${{{\mathbf{J}}^{ - 1}}\left( {{\mathbf{\tilde \Theta }}} \right)}$, with ${\mathbf{J}}\left( {\mathbf{\Theta }} \right)$ being the Fisher Information Matrix (FIM).
\\\indent Unlike the GLRT, the Rao test only lets the BS determine if the radar is using the searching mode, i.e., whether the orthogonal waveform matrix ${\mathbf{X}}_0$ is transmitted in the current radar PRI. In that case, the BS could obtain the MLE of the channel by use of (\ref{eq13}). Otherwise, the BS is required to wait until an orthogonal waveform is transmitted by the radar.
\subsection{Agnostic BS}
We now consider the hardest case that the BS does not know any of the waveforms transmitted by the radar. In this case, the BS still knows that ${{\mathbf{X}}}{\mathbf{X}}^H = \frac{{L{P_R}}}{M}{{\mathbf{I}}_M}$ for an omni-directional radar transmission. Therefore, the HT problem in (\ref{eq15}) can be recast as
\begin{equation}\label{eq19}
\begin{gathered}
  {\mathcal{H}_0}:{\mathbf{X}}{{\mathbf{X}}^H} = \frac{{L{P_R}}}{M}{{\mathbf{I}}_M},{\mathbf{G}}, \hfill \\
  {\mathcal{H}_1}:{\mathbf{X}}{{\mathbf{X}}^H} \ne \frac{{L{P_R}}}{M}{{\mathbf{I}}_M},{\mathbf{G}}. \hfill \\
\end{gathered}
\end{equation}
\\\indent \emph{Remark 2:} At first glance, we might be able to apply a generalized RT to solve the HT problem, where both the true values of ${\mathbf{G}}$ and ${\mathbf{X}}_0$ are replaced by their MLEs. This is because ${\mathbf{X}}_0$ is also unknown to the BS. Note that to obtain the MLEs of these two parameters is equivalent to solving the following optimization problem
\begin{equation}\label{eq20}
  \mathop {\min}\limits_{{\mathbf{G}},{\mathbf{X}}} \left\| {{\mathbf{Y}} - {\mathbf{GX}}} \right\|_F^2 \;\;s.t.\;\;{\mathbf{X}}{{\mathbf{X}}^H} = \frac{{L{P_R}}}{M}{{\mathbf{I}}_M}. \hfill \\
\end{equation}
Again, the above problem will unfortunately yield trivial solutions and make the HT design meaningless. This is because ${\mathbf{X}}$ can be viewed as a group of orthogonal basis, and the unconstrained ${\mathbf{G}}$ spans the whole space, which makes any given $\mathbf{Y}$ achievable with a high probability.
\\\indent The above remark involves that it is challenging to blindly estimate the ICSI for an agnostic BS under the NLoS channel scenario. Instead, we will show in the next section that blind channel estimation is feasible for the LoS channel scenario.

\section{LoS Channel Scenario}
In this section, we consider the scenario that the interfering channel between radar and BS is a LoS channel, where the received signal matrix at the BS is given by
\begin{equation}\label{eq21}
{\mathbf{Y}} = \alpha {\mathbf{b}}\left( \theta  \right){{\mathbf{a}}^H}\left( \theta  \right){\mathbf{X}} + {\mathbf{W}},
\end{equation}
where $\alpha$ represents the large-scale fading factor, $\theta$ is the angle of arrival (AoA) from the radar to the BS, ${\mathbf{b}}\left( \theta  \right) = {\left[ {1,{e^{j2\pi \Omega \sin \left( \theta  \right)}},...,{e^{j2\pi \left( {N - 1} \right)\Omega \sin \left( \theta  \right)}}} \right]^T} \in {\mathbb{C}^{N \times 1}}$ is the steering vector of the BS antenna array, with $\Omega$ being the normalized spacing, and ${\mathbf{a}}\left( \theta  \right)$ is radar's steering vector defined in Sec. II-A. Since the ULA geometry of the radar is fixed, we assume that the BS knows the spacing between the adjacent antennas of radar. Hence, the channel parameters that need to be estimated at the BS are $\alpha$ and $\theta$.
\\\indent Adopting the ideal assumption that the BS has instantaneous knowledge of the radar-transmitted waveform X in each PRI, the MLEs of the two parameters could be obtained by solving the optimization problem
\begin{equation}\label{eq22}
\mathop {\min }\limits_{\alpha ,\theta } \left\| {{\mathbf{Y}} - \alpha {\mathbf{b}}\left( \theta  \right){{\mathbf{a}}^H}\left( \theta  \right){\mathbf{X}}} \right\|_F^2.
\end{equation}
Note that if $\theta$ is fixed, the MLE of $\alpha$ can be given as
\begin{equation}\label{eq23}
\hat \alpha  = \frac{{{{\mathbf{b}}^H}\left( \theta  \right){\mathbf{Y}}{{\mathbf{X}}^H}{\mathbf{a}}\left( \theta  \right)}}{{L{{\left\| {{\mathbf{b}}\left( \theta  \right)} \right\|}^2}{{\mathbf{a}}^H}\left( \theta  \right){{\mathbf{R}}_X}{\mathbf{a}}\left( \theta  \right)}} = \frac{{{{\mathbf{b}}^H}\left( \theta  \right){\mathbf{Y}}{{\mathbf{X}}^H}{\mathbf{a}}\left( \theta  \right)}}{{NL{{\mathbf{a}}^H}\left( \theta  \right){{\mathbf{R}}_X}{\mathbf{a}}\left( \theta  \right)}},
\end{equation}
which suggests that the MLE of $\alpha$ depends on that of $\theta$. Substituting (\ref{eq23}) into the objective function of (\ref{eq22}), the MLE of $\theta$ can be thus given by
\begin{equation}\label{eq24}
\hat \theta  = \arg \mathop {\min }\limits_\theta  f\left( {{\mathbf{Y}};\theta ,{\mathbf{X}}} \right),
\end{equation}
where
\begin{equation}\label{eq25}
f\left( {{\mathbf{Y}};\theta ,{\mathbf{X}}} \right) = \left\| {{\mathbf{Y}} - \frac{{{{\mathbf{b}}^H}\left( \theta  \right){\mathbf{Y}}{{\mathbf{X}}^H}{\mathbf{a}}\left( \theta  \right){\mathbf{b}}\left( \theta  \right){{\mathbf{a}}^H}\left( \theta  \right){\mathbf{X}}}}{{NL{{\mathbf{a}}^H}\left( \theta  \right){{\mathbf{R}}_X}{\mathbf{a}}\left( \theta  \right)}}} \right\|_F^2.
\end{equation}
While (25) is non-convex, the optimum can be easily obtained through a 1-dimensional search over $\theta$.
\subsection{BS Knows the Searching and Tracking Waveforms - GLRT}
By assuming that the BS knows both $\mathbf{X}_0$ and $\mathbf{X}_1$, the HT problem (\ref{eq7}) can be reformulated as
\begin{equation}\label{eq26}
{\mathbf{Y}} = \left\{ \begin{gathered}
  {\mathcal{H}_0}:\alpha {\mathbf{b}}\left( \theta  \right){{\mathbf{a}}^H}\left( \theta  \right){{\mathbf{X}}_0} + {\mathbf{W}}, \hfill \\
  {\mathcal{H}_1}:\alpha {\mathbf{b}}\left( \theta  \right){{\mathbf{a}}^H}\left( \theta  \right){{\mathbf{X}}_1} + {\mathbf{W}}. \hfill \\
\end{gathered}  \right.
\end{equation}
The GLRT detector can be again applied to the LoS channel, in which case the likelihood functions under the two hypotheses are given as
\begin{equation}\label{eq27}
\begin{gathered}
  p\left( {{\mathbf{Y}};{{\hat \theta }_0},{\mathcal{H}_0}} \right) = {\left( {\pi {N_0}} \right)^{ - NL}}\exp \left( { - \frac{1}{{{N_0}}}f\left( {{\mathbf{Y}};{{\hat \theta }_0},{{\mathbf{X}}_0}} \right)} \right), \hfill \\
  p\left( {{\mathbf{Y}};{{\hat \theta }_1},{\mathcal{H}_1}} \right) = {\left( {\pi {N_0}} \right)^{ - NL}}\exp \left( { - \frac{1}{{{N_0}}}f\left( {{\mathbf{Y}};{{\hat \theta }_1},{{\mathbf{X}}_1}} \right)} \right), \hfill \\
\end{gathered}
\end{equation}
where $f$ is defined in (\ref{eq25}), and ${{\hat \theta }_0}$ and ${{\hat \theta }_1}$ are the MLEs of $\theta$ under the two hypotheses, respectively. By recalling (\ref{eq9}), the GLRT detector can be expressed as
\begin{equation}\label{eq28}
L_G^{LoS}\left( {\mathbf{Y}} \right) = \frac{1}{{{N_0}}}\left( {f\left( {{\mathbf{Y}};{{\hat \theta }_0},{{\mathbf{X}}_0}} \right) - f\left( {{\mathbf{Y}};{{\hat \theta }_1},{{\mathbf{X}}_1}} \right)} \right)\mathop  \gtrless \limits_{{\mathcal{H}_0}}^{{\mathcal{H}_1}} \gamma.
\end{equation}
The analytic distribution for (\ref{eq28}) is not obtainable, since there is no closed-form solution of $\hat \theta$ under both hypotheses.
\subsection{BS Knows Only the Searching Waveform - Energy Detection}
Similar to the NLoS channel case, a more practical assumption is to consider that the BS knows only the searching waveform $\mathbf{X}_0$. In this case, the GLRT detector is no longer applicable and the HT is given by
\begin{equation}\label{eq29}
\begin{gathered}
  {\mathcal{H}_0}:{\mathbf{X}} = {{\mathbf{X}}_0},\alpha ,\theta , \hfill \\
  {\mathcal{H}_1}:{\mathbf{X}} \ne {{\mathbf{X}}_0},\alpha ,\theta. \hfill \\
\end{gathered}
\end{equation}
At first glance, it seems that the Rao detector (\ref{eq18}) can be trivially extended from the NLoS channel scenario to the LoS case. Nevertheless, the following proposition puts an end to such a possibility.
\begin{proposition}
The Rao test does not exist for the scenario of the LoS channel.
\end{proposition}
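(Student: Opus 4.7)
The plan is to show that the Fisher Information Matrix (FIM) associated with the parameter vector $\mathbf{\Theta} = \left[\operatorname{vec}^T(\mathbf{X}), \alpha, \theta\right]^T$ is singular in the LoS scenario, so that the inverse $\mathbf{J}^{-1}(\tilde{\mathbf{\Theta}})$ required in the Rao test statistic (\ref{eq18}) does not exist, and in particular its upper-left block $\left[\mathbf{J}^{-1}(\tilde{\mathbf{\Theta}})\right]_{\boldsymbol{\theta}_r\boldsymbol{\theta}_r}$ cannot be formed.

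First, I would rewrite the mean of $\mathbf{Y}$ under the LoS model (\ref{eq21}) as $\boldsymbol{\mu}(\mathbf{\Theta}) = \alpha\,\mathbf{b}(\theta)\mathbf{a}^H(\theta)\mathbf{X}$ and vectorize it as $\operatorname{vec}(\boldsymbol{\mu}) = \left(\mathbf{I}_L \otimes \alpha \mathbf{b}(\theta)\mathbf{a}^H(\theta)\right)\operatorname{vec}(\mathbf{X})$. Because $\mathbf{Y}$ is a circularly symmetric complex Gaussian with mean $\boldsymbol{\mu}(\mathbf{\Theta})$ and covariance $N_0\mathbf{I}_{NL}$, standard Wirtinger-calculus computations reduce the FIM to $\mathbf{J} = \frac{1}{N_0}\mathbf{D}^H\mathbf{D}$, where $\mathbf{D}$ is the Jacobian of $\operatorname{vec}(\boldsymbol{\mu})$ with respect to the full parameter vector $\mathbf{\Theta}$.

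The central observation is that the $\operatorname{vec}(\mathbf{X})$-block of $\mathbf{D}$ equals $\mathbf{D}_X = \mathbf{I}_L \otimes \alpha\mathbf{b}(\theta)\mathbf{a}^H(\theta)$, which is an $NL \times ML$ matrix of rank $L$ since the outer product $\mathbf{b}(\theta)\mathbf{a}^H(\theta)$ has rank $1$. Concretely, every perturbation $\boldsymbol{\Delta}\in\mathbb{C}^{M\times L}$ in the $(M-1)L$-dimensional subspace $\{\boldsymbol{\Delta}:\mathbf{a}^H(\theta)\boldsymbol{\Delta}=\mathbf{0}\}$ leaves the signal mean unchanged, so $\operatorname{vec}(\boldsymbol{\Delta})$ lies in the null space of $\mathbf{D}_X$ and hence of $\mathbf{D}$. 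The two remaining columns in $\mathbf{D}$, associated with the nuisance scalars $\alpha$ and $\theta$, contribute at most two additional ranks and therefore cannot compensate for the $(M-1)L$-dimensional deficit whenever $M>1$. Consequently, $\mathbf{J}(\tilde{\mathbf{\Theta}}) = \frac{1}{N_0}\mathbf{D}^H\mathbf{D}$ is rank-deficient and non-invertible.

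The conclusion is then immediate: since $\mathbf{J}(\tilde{\mathbf{\Theta}})$ is singular, $\mathbf{J}^{-1}(\tilde{\mathbf{\Theta}})$ is undefined, the submatrix $\left[\mathbf{J}^{-1}(\tilde{\mathbf{\Theta}})\right]_{\boldsymbol{\theta}_r\boldsymbol{\theta}_r}$ in (\ref{eq18}) cannot be computed, and the Rao test statistic is therefore ill-posed. I expect the most delicate point to be making the non-existence argument airtight, namely confirming that no block-wise manipulation, such as extracting $\left[\mathbf{J}^{-1}\right]_{\boldsymbol{\theta}_r\boldsymbol{\theta}_r}$ through a Schur complement involving only the nuisance block, can circumvent the obstruction. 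This can be settled by noting that the rank deficiency already resides in the $\operatorname{vec}(\mathbf{X})$-block of the Jacobian, while the nuisance columns add at most two ranks; under the standing assumption $M>2$ the deficit is at least $(M-1)L \ge 2L > 2$, so no rearrangement restores invertibility, completing the argument.
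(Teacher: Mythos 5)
Your proposal is correct and follows essentially the same route as the paper's Appendix~A: the rank-$1$ factor $\mathbf{b}(\theta)\mathbf{a}^H(\theta)$ limits the $\operatorname{vec}(\mathbf{X})$-block of the Fisher information to rank $L$, the two nuisance parameters $(\alpha,\theta)$ contribute at most rank $2$, and since $L+2<ML$ for $M>2$ the matrix to be inverted in the Rao statistic (the Schur complement $\mathbf{J}_{rr}-\mathbf{J}_{rs}\mathbf{J}_{ss}^{-1}\mathbf{J}_{sr}$) is singular. Your Jacobian factorization $\mathbf{J}\propto\mathbf{D}^H\mathbf{D}$ and explicit null-space description are just a slightly different packaging of the same rank-counting argument, and your closing remark correctly closes the Schur-complement loophole exactly as the paper does.
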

\begin{proof}
See Appendix A.
\end{proof}
The algebraic explanation behind Proposition 1 is intuitive. As shown in (\ref{eq21}), by multiplying the rank-1 LoS channel to the radar waveform, the latter is mapped to a rank-1 subspace, which leads to serious information losses and yields a non-invertible FIM. Recalling (\ref{eq18}), the Rao test requires to compute the inverse of the FIM. Hence, it simply does not work in this specific case.
\\\indent To resolve the aforementioned issue, we consider an energy detection (ED) approach for the LoS channel. According to (\ref{eq21}), the average power of the received radar signal is given as
\begin{equation}\label{eq30}
\begin{gathered}
  {P_{LoS}} = \mathbb{E}\left( {\operatorname{tr} \left( {{\mathbf{Y}}{{\mathbf{Y}}^H}} \right)} \right) \hfill \\
   = \mathbb{E}\left( \begin{gathered}
  \operatorname{tr} \left( {{{\left| \alpha  \right|}^2}{\mathbf{b}}\left( \theta  \right){{\mathbf{a}}^H}\left( \theta  \right){\mathbf{X}}{{\mathbf{X}}^H}{\mathbf{a}}\left( \theta  \right){{\mathbf{b}}^H}\left( \theta  \right) + {\mathbf{W}}{{\mathbf{W}}^H}} \right) \hfill \\
  {\text{ +  }}2\operatorname{Re} \left( {\operatorname{tr} \left( {\alpha {\mathbf{b}}\left( \theta  \right){{\mathbf{a}}^H}\left( \theta  \right){\mathbf{X}}{{\mathbf{W}}^H}} \right)} \right) \hfill \\
\end{gathered}  \right) \hfill \\
   = \mathbb{E}\left( {\operatorname{tr} \left( {{{\left| \alpha  \right|}^2}{\mathbf{b}}\left( \theta  \right){{\mathbf{a}}^H}\left( \theta  \right){\mathbf{X}}{{\mathbf{X}}^H}{\mathbf{a}}\left( \theta  \right){{\mathbf{b}}^H}\left( \theta  \right){\text{ + }}{\mathbf{W}}{{\mathbf{W}}^H}} \right)} \right){\text{ }} \hfill \\
   \approx \frac{1}{L}\operatorname{tr} \left( {{{\left| \alpha  \right|}^2}{\mathbf{b}}\left( \theta  \right){{\mathbf{a}}^H}\left( \theta  \right){\mathbf{X}}{{\mathbf{X}}^H}{\mathbf{a}}\left( \theta  \right){{\mathbf{b}}^H}\left( \theta  \right)} \right) + N{N_0} \hfill \\
   = {\left| \alpha  \right|^2}{P_d}\left( \theta  \right)\operatorname{tr} \left( {{\mathbf{b}}\left( \theta  \right){{\mathbf{b}}^H}\left( \theta  \right)} \right) + N{N_0} \hfill \\
   = N{\left| \alpha  \right|^2}{P_d}\left( \theta  \right) + N{N_0}, \hfill \\
\end{gathered}
\end{equation}
where ${P_d}\left( \theta  \right)$ is the radar transmit beampattern defined in (\ref{eq2}), and the approximation in the fourth line of (\ref{eq30}) is based on the Law of Large Numbers.
\begin{figure}[!tp]
 \centering
 \includegraphics[width=0.9\columnwidth]{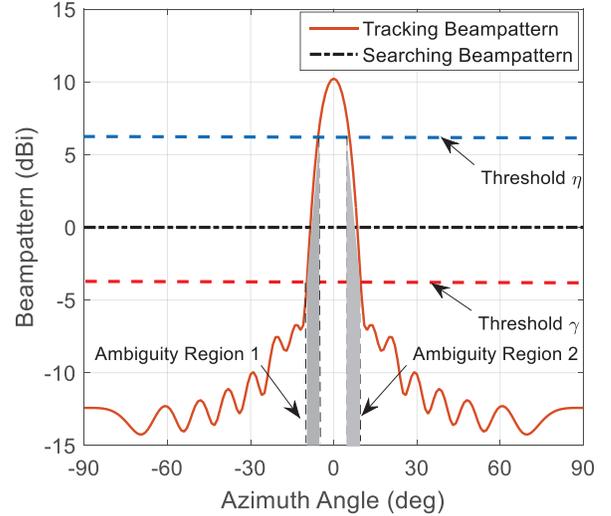}
 %\hspace{2cm}
 \caption{Searching and tracking beampatterns of the MIMO radar.}
 \label{fig:Search_and_Track_bmpts}
\end{figure}
From (\ref{eq30}), it is obvious that the received power at the BS is proportional to the radar's transmit power at the angle $\theta$. If the searching waveform $\mathbf{X}_0$ is transmitted, we have
\begin{equation}\label{eq31}
{P_d}\left( \theta  \right) = \frac{{{P_R}}}{M}{{\mathbf{a}}^H}\left( \theta  \right){{\mathbf{I}}_M}{\mathbf{a}}\left( \theta  \right) = {P_R},
\end{equation}
which suggests that the BS will receive equal power at each angle $\theta$. On the other hand, if the tracking waveform $\mathbf{X}_1$ is transmitted, most of the power will focus at the mainlobe, while less power will be distributed among the sidelobes, in which case the BS receives high power when it is located at the mainlobe of the radar, and much lower power at other angles. According to the aforementioned observations, in this paper we let the BS define two power measurement thresholds to determine whether the radar is in searching or tracking mode. As shown in Fig. 3\footnote{The tracking beampattern in Fig. 3 is generated based on the convex optimization method in \cite{4350230}, which we show in (\ref{eq_opt}) in Sec. VI.}, the BS chooses ${\mathcal{H}}_0$ if the received power falls between the two proposed thresholds, and it chooses ${\mathcal{H}}_1$ otherwise. Accordingly, the ED detector can be given as
\begin{equation}\label{eq32}
\begin{gathered}
  {T_E}\left({\mathbf{Y}}\right) = \frac{1}{L}\operatorname{tr} \left( {{\mathbf{Y}}{{\mathbf{Y}}^H}} \right) \in \left[ {\gamma ,\eta } \right] \to {\mathcal{H}_0}, \hfill \\
  {T_E}\left({\mathbf{Y}}\right) = \frac{1}{L}\operatorname{tr} \left( {{\mathbf{Y}}{{\mathbf{Y}}^H}} \right) \in \left( {0,\gamma } \right] \cup \left[ {\eta , + \infty } \right) \to {\mathcal{H}_1},
\end{gathered}
\end{equation}
where $\gamma$ and $\eta$ are the two power thresholds.
\\\indent \emph{Remark 3:} Note that the performance of the detector in (\ref{eq32}) depends on the size of the ambiguity regions shown in Fig. 3. By narrowing the distance between $\gamma$ and $\eta$, the detector trades-off the tolerance of the noise with the ambiguity area.
\\\indent By using the ED detector, the BS could choose from the two hypotheses without knowing both waveforms. Once $\mathcal{H}_0$ is chosen, the BS can estimate the AoA by finding the minimum of $f\left( {{\mathbf{Y}};\theta ,{\mathbf{X}}_0} \right)$.

\subsection{Agnostic BS}
Finally, we consider the hardest case where the BS does not know either the searching or tracking waveform. Note that the energy detector (\ref{eq32}) still works in this case, as it does not require any information about $\mathbf{X}_0$ or $\mathbf{X}_1$. The remaining question is how to estimate the channel. In order to do so, we first note that for the case of omni-directional transmission we have
\begin{equation}\label{eq33}
{P_d}\left( \theta  \right) = \frac{{{P_R}}}{M}{{\mathbf{a}}^H}\left( \theta  \right){{\mathbf{I}}_M}{\mathbf{a}}\left( \theta  \right) = {P_R},
\end{equation}
in which case (\ref{eq31}) can be rewritten as
\begin{equation}\label{eq34}
\begin{gathered}
  {P_{LoS}} = \mathbb{E}\left( {\operatorname{tr} \left( {{\mathbf{Y}}{{\mathbf{Y}}^H}} \right)} \right) \hfill \\
   \approx \frac{1}{L}\operatorname{tr} \left( {{\mathbf{Y}}{{\mathbf{Y}}^H}} \right) \approx N{P_R}{\left| \alpha  \right|^2} + N{N_0}. \hfill \\
\end{gathered}
\end{equation}
From (\ref{eq34}), it follows that
\begin{equation}\label{eq35}
{\left| \alpha  \right|^2} \approx \frac{{\operatorname{tr} \left( {{\mathbf{Y}}{{\mathbf{Y}}^H}} \right)}}{{LN{P_R}}} - \frac{{{N_0}}}{{{P_R}}},
\end{equation}
which can be used for estimating the absolute value of $\alpha$. It can be further observed that for the omni-directional transmission, we also have
\begin{equation}\label{eq36}
\begin{gathered}
  \frac{1}{L}{\mathbf{Y}}{{\mathbf{Y}}^H} = \frac{{{{\left| \alpha  \right|}^2}{P_R}}}{M}{\mathbf{b}}\left( \theta  \right){{\mathbf{a}}^H}\left( \theta  \right){{\mathbf{I}}_M}{\mathbf{a}}\left( \theta  \right){{\mathbf{b}}^H}\left( \theta  \right) + {\mathbf{\tilde W}} \hfill \\
   = {\left| \alpha  \right|^2}{P_R}{\mathbf{b}}\left( \theta  \right){{\mathbf{b}}^H}\left( \theta  \right) + {\mathbf{\tilde W}}, \hfill \\
\end{gathered}
\end{equation}
where ${\mathbf{\tilde W}}$ is the noise matrix. The LSE of $\theta$ can be thus given by
\begin{equation}\label{eq37}
\hat \theta  = \arg \mathop {\min }\limits_\theta  \left\| {\frac{{{\mathbf{Y}}{{\mathbf{Y}}^H}}}{{L{P_R}}} - {{\left| \alpha  \right|}^2}{\mathbf{b}}\left( \theta  \right){{\mathbf{b}}^H}\left( \theta  \right)} \right\|_F^2.
\end{equation}
\\\indent Overall, once $\mathcal{H}_0$ is chosen by the energy detector (\ref{eq32}), one can estimate ${{\left| \alpha  \right|}^2}$ and $\theta$ by (\ref{eq35}) and (\ref{eq37}) respectively, even without any knowledge about the radar waveforms. We remark that, since the noise matrix ${\mathbf{\tilde W}}$ is no longer Gaussian distributed, (\ref{eq35}) and (\ref{eq37}) are not MLEs of the parameters.
\\\indent For clarity, we summarize the proposed approaches for different scenarios in Table. I.
\begin{table}
%% increase table row spacing, adjust to taste
\renewcommand{\arraystretch}{1.3}
% if using array.sty, it might be a good idea to tweak the value of
% \extrarowheight as needed to properly center the text within the cells
\caption{Proposed Approaches for Different Scenarios}
\label{table_example}
\centering
%% Some packages, such as MDW tools, offer better commands for making tables
%% than the plain LaTeX2e tabular which is used here.
% \begin{footnotesize}
\begin{tabular}{lcc}
\toprule
%{\diagbox{\bf Parameters}{\bf No. of Curves}} & \bf 1 (Chirp-rate only) & \bf 2 (Chirp-rate only) & \bf 3 (diag deployment) & \bf 4 (Rectangular deployment)\\
\bf  & \bf{NLoS Channel} & \bf{LoS Channel} \\
\midrule
BS Knows Both Waveforms & GLRT & GLRT  \\
BS Knows Searching Waveform & Rao Test & Energy Detection  \\
Agnostic BS & None & Energy Detection \\
\bottomrule
\end{tabular}
% \end{footnotesize}}
\end{table}

%\begin{table}
%%% increase table row spacing, adjust to taste
%\renewcommand{\arraystretch}{1.3}
%% if using array.sty, it might be a good idea to tweak the value of
%% \extrarowheight as needed to properly center the text within the cells
%\caption{Computational Complexity for the Proposed Approaches}
%\label{table_example}
%\centering
%%% Some packages, such as MDW tools, offer better commands for making tables
%%% than the plain LaTeX2e tabular which is used here.
%% \begin{footnotesize}
%\begin{tabular}{lr}
%\toprule
%%{\diagbox{\bf Parameters}{\bf No. of Curves}} & \bf 1 (Chirp-rate only) & \bf 2 (Chirp-rate only) & \bf 3 (diag deployment) & \bf 4 (Rectangular deployment)\\
% \bf Method & \bf Complex Flops \\
%\midrule
%Omnidirectional Design & $\mathcal{O}\left(NKL+NL^2\right)$  \\
%Directional Design & $\mathcal{O}\left(NL^2+N^2L+NKL+N^3+N^2K\right)$ \\
%Algorithm 1 & $\mathcal{O}\left(N^2L+NKL+N^3+N^2K\right)$ \\
%Algorithm 2 & $\mathcal{O}\left(N_{iter}N^2L+N_{iter}NKL\right)$ \\
%Zero-forcing (benchmark) & $\mathcal{O}\left(NKL+N^2K\right)$ \\
%\bottomrule
%\end{tabular}
%% \end{footnotesize}}
%\end{table}

\section{Theoretical Performance Analysis}
In this section, we provide the theoretical performance analysis for the proposed hypothesis testing and channel estimation approaches. With this purpose, we use decision error probability and the mean squared error (MSE) as performance metrics.
\subsection{GLRT for NLoS Channels}
To analyze the performance of the GLRT detector, the MLEs of the unknown parameters under different hypotheses must be derived in closed-forms. While we consider GLRT for both NLoS and LoS channels in the previous discussion, the closed-form MLE of the AoA is not obtainable for the LoS channel. Therefore, we will only analyze the GLRT performance for the NLoS channel in this subsection. Firstly, let us substitute (\ref{eq12}) and (\ref{eq13}) into (\ref{eq10}) and (\ref{eq11}), which yield
\begin{equation}\label{eq38}
\begin{small}
\begin{gathered}
  p\left( {{\mathbf{Y}};{{{\mathbf{\hat G}}}_0},{\mathcal{H}_0}} \right) \hfill \\
   = {\left( {\pi {N_0}} \right)^{ - NL}}\exp \left( { - \frac{1}{{{N_0}}}\operatorname{tr} \left( {{{\left( {{\mathbf{Y}} - {{{\mathbf{\hat G}}}_0}{{\mathbf{X}}_0}} \right)}^H}\left( {{\mathbf{Y}} - {{{\mathbf{\hat G}}}_0}{{\mathbf{X}}_0}} \right)} \right)} \right) \hfill \\
   = {\left( {\pi {N_0}} \right)^{ - NL}}\exp \left( { - \frac{1}{{{N_0}}}\operatorname{tr} \left( {{\mathbf{Y}}\left( {{\mathbf{I}} - \frac{M}{{L{P_R}}}{\mathbf{X}}_0^H{{\mathbf{X}}_0}} \right){{\mathbf{Y}}^H}} \right)} \right), \hfill \\
\end{gathered}
\end{small}
\end{equation}
and
\begin{equation}\label{eq39}
\begin{small}
\begin{gathered}
  p\left( {{\mathbf{Y}};{{{\mathbf{\hat G}}}_1},{\mathcal{H}_1}} \right) \hfill \\
   = {\left( {\pi {N_0}} \right)^{ - NL}}\exp \left( { - \frac{1}{{{N_0}}}\operatorname{tr} \left( {{{\left( {{\mathbf{Y}} - {{{\mathbf{\hat G}}}_1}{{\mathbf{X}}_1}} \right)}^H}\left( {{\mathbf{Y}} - {{{\mathbf{\hat G}}}_1}{{\mathbf{X}}_1}} \right)} \right)} \right) \hfill \\
   = {\left( {\pi {N_0}} \right)^{ - NL}}\exp \left( { - \frac{1}{{{N_0}}}\operatorname{tr} \left( {{\mathbf{Y}}\left( {{\mathbf{I}} - {\mathbf{X}}_1^H{{\left( {{{\mathbf{X}}_1}{\mathbf{X}}_1^H} \right)}^{ - 1}}{{\mathbf{X}}_1}} \right){{\mathbf{Y}}^H}} \right)} \right). \hfill \\
\end{gathered}
\end{small}
\end{equation}
\begin{figure*}[ht]
% ensure that we have normalsize text
\normalsize
% Store the current equation number.
\newcounter{MYtempeqncnt1}
\setcounter{MYtempeqncnt1}{\value{equation}}
% Set the equation number to one less than the one
% desired for the first equation here.
% The value here will have to changed if equations
% are added or removed prior to the place these
% equations are referenced in the main text.
\setcounter{equation}{50}
\begin{equation}\label{eq51}
{T_R}\left({{\mathbf{Y}}}\right)  = \frac{2}{{{N_0}}}\operatorname{tr} \left( {\left( {{{\mathbf{I}}_L} - \frac{M}{{L{P_R}}}{\mathbf{X}}_0^H{{\mathbf{X}}_0}} \right){{\mathbf{Y}}^H}{\mathbf{YX}}_0^H{{\left( {{{\mathbf{X}}_0}{{\mathbf{Y}}^H}{\mathbf{YX}}_0^H} \right)}^{ - 1}}{{\mathbf{X}}_0}{{\mathbf{Y}}^H}{\mathbf{Y}}} \right)\mathop  \gtrless \limits_{{\mathcal{H}_0}}^{{\mathcal{H}_1}} \gamma.
\end{equation}
% Restore the current equation number.
\setcounter{equation}{\value{MYtempeqncnt1}}
% The IEEE uses as a separator
\hrulefill
% The spacer can be tweaked to stop underfull vboxes.
\vspace*{4pt}
\end{figure*}
Taking the logarithm of (\ref{eq9}) we obtain
\begin{equation}\label{eq40}
\begin{gathered}
  \ln \frac{{p\left( {{\mathbf{Y}};{{{\mathbf{\hat G}}}_1},{\mathcal{H}_1}} \right)P_D^{\left(i-1\right)}}}{{p\left( {{\mathbf{Y}};{{{\mathbf{\hat G}}}_0},{\mathcal{H}_0}} \right)\left( {1 - P_D^{\left(i-1\right)}} \right)}} \hfill \\
   = \frac{1}{{{N_0}}}\operatorname{tr} \left( {{\mathbf{Y}}\left( {{\mathbf{X}}_1^H{{\left( {{{\mathbf{X}}_1}{\mathbf{X}}_1^H} \right)}^{ - 1}}{{\mathbf{X}}_1} - \frac{M}{{L{P_R}}}{\mathbf{X}}_0^H{{\mathbf{X}}_0}} \right){{\mathbf{Y}}^H}} \right) \hfill \\
  \;\;\; - \ln \left( {\frac{{1 - {P_D^{\left(i-1\right)}}}}{{{P_D^{\left(i-1\right)}}}}} \right)\mathop  \gtrless \limits_{{\mathcal{H}_0}}^{{\mathcal{H}_1}} \gamma_0.  \hfill \\
\end{gathered}
\end{equation}
Finally, the GLRT detector can be given as
\begin{equation}\label{eq41}
\begin{small}
\begin{gathered}
  {L_G}\left({{\mathbf{Y}}}\right)  = \frac{1}{{{N_0}}}\operatorname{tr} \left( {{\mathbf{Y}}\left( {{\mathbf{X}}_1^H{{\left( {{{\mathbf{X}}_1}{\mathbf{X}}_1^H} \right)}^{ - 1}}{{\mathbf{X}}_1} - \frac{M}{{L{P_R}}}{\mathbf{X}}_0^H{{\mathbf{X}}_0}} \right){{\mathbf{Y}}^H}} \right) \hfill \\
  \;\;\;\;\mathop  \gtrless \limits_{{\mathcal{H}_0}}^{{\mathcal{H}_1}}\gamma = \gamma_0 + \ln \left( {\frac{{1 - P_D^{\left(i-1\right)}}}{{P_D^{\left(i-1\right)}}}} \right). \hfill \\
\end{gathered}
\end{small}
\end{equation}
Note that both ${{\mathbf{X}}_1^H{{\left( {{{\mathbf{X}}_1}{\mathbf{X}}_1^H} \right)}^{ - 1}}{{\mathbf{X}}_1}}$ and ${\frac{M}{{L{P_R}}}{\mathbf{X}}_0^H{{\mathbf{X}}_0}}$ are projection matrices \cite{kay1998fundamentals}. The physical meaning of (\ref{eq41}) is intuitive, i.e., to project the received signal onto the row spaces of ${\mathbf{X}}_1$ and ${\mathbf{X}}_0$ respectively, and to compute the difference between the lengths of the projections to decide which hypothesis to choose. Letting $P_D^{\left(i-1\right)}=0.5$, we have $\ln \left( {\frac{{1 - P_D^{\left(i-1\right)}}}{{P_D^{\left(i-1\right)}}}} \right)=0$ and $\gamma = \gamma_0$, which represents the case that $P_D$ is unknown.
\\\indent We now derive the Cumulative Distribution Function (CDF) of $L_G$. Defining
\begin{equation}\label{eq42}
\begin{gathered}
  {\mathbf{A}} = {\mathbf{X}}_1^H{\left( {{{\mathbf{X}}_1}{\mathbf{X}}_1^H} \right)^{ - 1}}{{\mathbf{X}}_1},{\mathbf{B}} = \frac{M}{{L{P_R}}}{\mathbf{X}}_0^H{{\mathbf{X}}_0}, \hfill \\
  {\mathbf{\tilde y}} = \frac{{\operatorname{vec} \left( {{{\mathbf{Y}}^H}} \right)}}{{\sqrt {{N_0}} }},{\mathbf{D}} = {{\mathbf{I}}_N} \otimes \left( {{\mathbf{A}} - {\mathbf{B}}} \right), \hfill \\
\end{gathered}
\end{equation}
it follows that
\begin{equation}\label{eq43}
{L_G}\left({{\mathbf{Y}}}\right) = {{{\mathbf{\tilde y}}}^H}\left( {{{\mathbf{I}}_N} \otimes \left( {{\mathbf{A}} - {\mathbf{B}}} \right)} \right){\mathbf{\tilde y}} = {{{\mathbf{\tilde y}}}^H}{\mathbf{D\tilde y}}.
\end{equation}
If $\mathbf{D}$ is an idempotent matrix, then the test statistic subjects to the non-central chi-squared distribution \cite{kay1998fundamentals}. While both $\mathbf{A}$ and $\mathbf{B}$ are idempotent, it is not clear if their difference is still idempotent. Moreover, their is no guarantee that $\mathbf{D}$ is semidefinite. Hence, $\mathbf{D}$ is an indefinite matrix in general, which makes ${L_G}$ an indefinite quadratic form (IQF) in Gaussian variables.
\\\indent Given the non-zero mean value of ${\mathbf{\tilde y}}$, ${L_G}$ becomes a non-central Gaussian IQF, which is known to have no closed-form expression for its CDF \cite{4712719,mathai1992quadratic}. Based on \cite{7312953}, here we consider a so-called saddle-point method to approximate the CDF of the test statistic. It is clear that ${\mathbf{\tilde y}} \sim \mathcal{C}\mathcal{N}\left( {{\mathbf{b}},{{\mathbf{I}}_{NL}}} \right)$, where
\begin{equation}\label{eq44}
{\mathbf{b}} = \left\{ \begin{gathered}
  {{{\mathcal{H}_0}:\operatorname{vec} \left( {{\mathbf{X}}_0^H{{\mathbf{G}}^H}} \right)} \mathord{\left/
 {\vphantom {{{\mathcal{H}_0}:\operatorname{vec} \left( {{\mathbf{X}}_0^H{{\mathbf{G}}^H}} \right)} {\sqrt {{N_0}} ,}}} \right.
 \kern-\nulldelimiterspace} {\sqrt {{N_0}} ,}} \hfill \\
  {{{\mathcal{H}_1}:\operatorname{vec} \left( {{\mathbf{X}}_1^H{{\mathbf{G}}^H}} \right)} \mathord{\left/
 {\vphantom {{{\mathcal{H}_1}:\operatorname{vec} \left( {{\mathbf{X}}_1^H{{\mathbf{G}}^H}} \right)} {\sqrt {{N_0}} ,}}} \right.
 \kern-\nulldelimiterspace} {\sqrt {{N_0}} ,}} \hfill \\
\end{gathered}  \right.
\end{equation}
which are the mean values for ${{\mathbf{\tilde y}}}$ under $\mathcal{H}_0$ and $\mathcal{H}_1$ respectively. Let us denote the eigenvalue decomposition of $\mathbf{D}$ as ${\mathbf{D}} = {\mathbf{Q\Lambda }}{{\mathbf{Q}}^H}$, where ${\mathbf{\Lambda }} = \operatorname{diag} \left( {{\lambda _1},{\lambda _2},...,{\lambda _{NL}}} \right)$ contains the eigenvalues. Based on the saddle-point approximation \cite{7312953}, the CDF of ${L_G}$ is given as
\begin{equation}\label{eq45}
P\left( {{L_G} \le \gamma } \right) \approx \frac{1}{{2\pi }}\exp \left( {s\left( {{\omega _0}} \right)} \right)\sqrt {\frac{{2\pi }}{{\left| {s''\left( {{\omega _0}} \right)} \right|}}},
\end{equation}
where
\begin{equation}\label{eq46}
s\left( \omega  \right) = \ln \left( {\frac{{{e^{\gamma \left( {j\omega  + \beta } \right)}}{e^{ - c\left( \omega  \right)}}}}{{\left( {j\omega  + \beta } \right)\det \left( {{\mathbf{I}} + \left( {j\omega  + \beta } \right){\mathbf{\Lambda }}} \right)}}} \right),
\end{equation}
\begin{equation}\label{eq47}
c\left( \omega  \right) = \sum\limits_{i = 1}^{NL} {{{\left| {{{\bar b}_i}} \right|}^2}}  - \sum\limits_{i = 1}^{NL} {\frac{{{{\left| {{{\bar b}_i}} \right|}^2}}}{{1 - \left( {j\omega  + \beta } \right){\lambda _i}}}},
\end{equation}
\begin{equation}\label{eq48}
{\mathbf{\bar b}} = {{\mathbf{Q}}^H}{\mathbf{b}} = {\left[ {{{\bar b}_1},{{\bar b}_2},...,{{\bar b}_{NL}}} \right]^T}.
\end{equation}
The above results hold for any $\beta > 0$. $\omega_0$ is the so-called saddle point, which is the solution of the following equation
\begin{equation}\label{eq49}
\begin{gathered}
  s'\left( {j\omega } \right) =  - \frac{1}{{\left( { - \omega  + \beta } \right)}} - \sum\limits_{i = 1}^{NL} {\frac{{{\lambda _i}}}{{1 + {\lambda _i}\left( { - \omega  + \beta } \right)}}}  \hfill \\
  \;\;\;\;\;\;\;\;\;\;\;\;\;\;\; + \gamma  - \sum\limits_{i = 1}^{NL} {\frac{{{{\left| {{{\bar b}_i}} \right|}^2}{\lambda _i}}}{{{{\left( {1 + {\lambda _i}\left( { - \omega  + \beta } \right)} \right)}^2}}}}  = 0, \hfill \\
\end{gathered}
\end{equation}
where $\omega  = j\left( {\beta  + p} \right)$. It has been proved that (\ref{eq49}) has a single real solution on $p \in \left( { - \infty ,0} \right)$ \cite{7312953}, which can be numerically found through a 1-dimensional searching.
\\\indent At the \emph{i}-th PRI, it is natural to measure the performance of GLRT by use of the decision error probability given the CDF of $L_G$, which is obtained as
\begin{equation}\label{eq50}
\begin{gathered}
  P_G^{\left(i\right)} = P\left( {{L_G} \ge \gamma ;{\mathcal{H}_0}} \right)P\left( {{\mathcal{H}_0}} \right) + P\left( {{L_G} \le \gamma ;{\mathcal{H}_1}} \right)P\left( {{\mathcal{H}_1}} \right) \hfill \\
  = \left( {1 - P\left( {{L_G} \le \gamma ;{\mathcal{H}_0}} \right)} \right)\left( {1 - P_D^{\left(i-1\right)}} \right) \hfill \\
  \;\;\;\;\;\;\;\;\;\;\;\;\;\;\;\;\;\;\;\;\;\;\;\;\;\;\;\;\;\;\;\;\;\;\;\; + P\left( {{L_G} \le \gamma ;{\mathcal{H}_1}} \right)P_D^{\left(i-1\right)}, \hfill \\
\end{gathered}
\end{equation}
where the CDF of ${L_G}$ under each hypothesis can be computed using the above equations (\ref{eq45})-(\ref{eq49}), by accordingly substituting the values of ${\mathbf{b}}$ under the two hypotheses, which are given in (\ref{eq44}).
\subsection{Rao Test for NLoS Channels}
We start from the following proposition.
\begin{proposition}
The Rao detector for solving (\ref{eq15}) is given by (\ref{eq51}), shown at the top of this page.
\end{proposition}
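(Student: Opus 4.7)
The plan is to derive the Rao test statistic in (51) by evaluating (18) directly. First I would write the log-likelihood
\[
\ln p(\mathbf{Y};\bm{\Theta}) = -NL\ln(\pi N_0) - \frac{1}{N_0}\operatorname{tr}\bigl((\mathbf{Y}-\mathbf{G}\mathbf{X})^H(\mathbf{Y}-\mathbf{G}\mathbf{X})\bigr),
\]
and compute the Wirtinger gradient $\partial \ln p / \partial \operatorname{vec}^*(\mathbf{X}) = (1/N_0)\operatorname{vec}\bigl(\mathbf{G}^H(\mathbf{Y}-\mathbf{G}\mathbf{X})\bigr)$. Evaluating at $\tilde{\bm{\Theta}}$, where $\mathbf{X} = \mathbf{X}_0$ and $\mathbf{G} = \hat{\mathbf{G}}_0 = (M/LP_R)\mathbf{Y}\mathbf{X}_0^H$ from (13), the residual reduces to $\mathbf{Y}\mathbf{P}^\perp$ with $\mathbf{P}^\perp = \mathbf{I}_L - (M/LP_R)\mathbf{X}_0^H\mathbf{X}_0$, so the score vector takes the compact matricial form $(M/(N_0 LP_R))\operatorname{vec}(\mathbf{X}_0\mathbf{Y}^H\mathbf{Y}\mathbf{P}^\perp)$. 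The factor $\mathbf{P}^\perp$ and the outer $\mathbf{X}_0(\cdot)\mathbf{X}_0^H$ pattern from (51) are thus already visible.

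Next I would compute the FIM $\mathbf{J}(\bm{\Theta})$ for $\bm{\Theta} = [\operatorname{vec}^T(\mathbf{X}),\operatorname{vec}^T(\mathbf{G})]^T$. Using the Gaussian-noise structure and standard Kronecker identities, the diagonal blocks come out to $\mathbf{J}_{\bm{\theta}_r\bm{\theta}_r} = (1/N_0)(\mathbf{I}_L \otimes \mathbf{G}^H\mathbf{G})$ and $\mathbf{J}_{\bm{\theta}_s\bm{\theta}_s} = (1/N_0)(\mathbf{X}^*\mathbf{X}^T \otimes \mathbf{I}_N)$, with the cross block proportional to $\mathbf{X}^T \otimes \mathbf{G}^H$. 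Forming the Schur complement $(\mathbf{J}_{\bm{\theta}_r\bm{\theta}_r} - \mathbf{J}_{\bm{\theta}_r\bm{\theta}_s}\mathbf{J}_{\bm{\theta}_s\bm{\theta}_s}^{-1}\mathbf{J}_{\bm{\theta}_s\bm{\theta}_r})^{-1}$ and invoking the mixed-product rule $(\mathbf{A}\otimes\mathbf{C})(\mathbf{B}\otimes\mathbf{D}) = \mathbf{A}\mathbf{B}\otimes\mathbf{C}\mathbf{D}$, the $\bm{\theta}_r$-$\bm{\theta}_r$ block of the inverse FIM should collapse to $N_0\,\mathbf{P}^\perp \otimes (\mathbf{G}^H\mathbf{G})^{-1}$, which when evaluated at $\mathbf{G} = \hat{\mathbf{G}}_0$ becomes $N_0(LP_R/M)^2\,\mathbf{P}^\perp \otimes (\mathbf{X}_0\mathbf{Y}^H\mathbf{Y}\mathbf{X}_0^H)^{-1}$ via (13). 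The projector factor is what will ultimately absorb the redundancy in the score.

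Finally I would plug the score and the inverse-FIM block into (18) and convert the resulting quadratic form into a trace via the identity $\operatorname{vec}(\mathbf{C})^H(\mathbf{A}\otimes\mathbf{B})\operatorname{vec}(\mathbf{C}) = \operatorname{tr}(\mathbf{C}^H\mathbf{B}\mathbf{C}\mathbf{A})$. Collecting the prefactors $2(M/(N_0 LP_R))^2 \cdot N_0(LP_R/M)^2 = 2/N_0$ and exploiting the idempotence $\mathbf{P}^\perp\mathbf{P}^\perp = \mathbf{P}^\perp$ should cancel one copy of the projector and yield exactly (51). The main obstacle I anticipate is the FIM computation itself: one must carefully respect the complex-parameter Wirtinger conventions for the cross blocks $\mathbf{J}_{\bm{\theta}_r\bm{\theta}_s}$ and then execute the Schur-complement algebra so that the Kronecker factors assemble cleanly into $\mathbf{P}^\perp\otimes(\mathbf{G}^H\mathbf{G})^{-1}$; once that reduction is in hand, the remaining steps are routine vec/trace manipulations combined with the explicit form of $\hat{\mathbf{G}}_0$.
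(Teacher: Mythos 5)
Your proposal follows essentially the same route as the paper's Appendix B: compute the Wirtinger score of the Gaussian log-likelihood with respect to $\operatorname{vec}(\mathbf{X})$, evaluate it at the restricted MLE $\hat{\mathbf{G}}_0$ in (13) so the residual becomes $\mathbf{Y}\bigl(\mathbf{I}_L-\tfrac{M}{LP_R}\mathbf{X}_0^H\mathbf{X}_0\bigr)$, assemble the Kronecker-structured FIM blocks, take the Schur complement, and collapse the quadratic form in (18) to the trace expression (51); your constant bookkeeping ($2/N_0$) and use of idempotence match the paper up to its different Wirtinger scaling (factors of $2$ and $4$ that cancel identically) and conjugate/transpose placement. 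The only nuance worth noting is that the Schur complement $\propto\bigl(\mathbf{I}_L-\tfrac{1}{\rho}\mathbf{X}_0^H\mathbf{X}_0\bigr)\otimes\mathbf{G}^H\mathbf{G}$ is rank-deficient, so the "inverse" is really a pseudo-inverse — your step exploiting that the projector is its own pseudo-inverse handles this at least as carefully as the paper, which keeps the formal inverse notation.
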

\begin{proof}
See Appendix B.
\end{proof}
It is clear from (\ref{eq51}) that we do not need any information about $\mathbf{X}_1$ for solving the HT problem (\ref{eq15}), which makes it a suitable detector for the practical scenario where the BS only knows $\mathbf{X}_0$. While $\mathbf{Y}$ is Gaussian distributed, it is very difficult to analytically derive the CDF of (\ref{eq51}) due to the highly non-linear operations involved. By realizing this, here we only focus our attention on a special case, where the distribution becomes tractable. Note that if $L\ge M = N$ holds true, ${\mathbf{YX}}_0^H \in \mathbb{C}^{N \times N}$ and ${{\mathbf{X}}_0}{{\mathbf{Y}}^H}\in \mathbb{C}^{N \times N}$ become the invertible square matrices with a high probability, in which case we have
\setcounter{equation}{51}
\begin{equation}\label{eq52}
\begin{gathered}
  {\mathbf{YX}}_0^H{\left( {{{\mathbf{X}}_0}{{\mathbf{Y}}^H}{\mathbf{YX}}_0^H} \right)^{ - 1}}{{\mathbf{X}}_0}{{\mathbf{Y}}^H} \hfill \\
   = {\left( {{{\left( {{{\mathbf{X}}_0}{{\mathbf{Y}}^H}} \right)}^{ - 1}}{{\mathbf{X}}_0}{{\mathbf{Y}}^H}{\mathbf{YX}}_0^H{{\left( {{\mathbf{YX}}_0^H} \right)}^{ - 1}}} \right)^{ - 1}} = {{\mathbf{I}}_N}. \hfill \\
\end{gathered}
\end{equation}
It follows that
\begin{equation}\label{eq53}
\begin{gathered}
  {T_{Rs}}\left({{\mathbf{Y}}}\right)  = \frac{2}{{{N_0}}}\operatorname{tr} \left( {\left( {{{\mathbf{I}}_L} - \frac{M}{{L{P_R}}}{\mathbf{X}}_0^H{{\mathbf{X}}_0}} \right){{\mathbf{Y}}^H}{\mathbf{Y}}} \right) \hfill \\
   = \frac{2}{{{N_0}}}\operatorname{tr} \left( {{\mathbf{Y}}\left( {{{\mathbf{I}}_L} - \frac{M}{{L{P_R}}}{\mathbf{X}}_0^H{{\mathbf{X}}_0}} \right){{\mathbf{Y}}^H}} \right) \hfill \\
   \triangleq \frac{2}{{{N_0}}}\operatorname{tr} \left( {{\mathbf{YP}}{{\mathbf{Y}}^H}} \right) \mathop  \gtrless \limits_{{\mathcal{H}_0}}^{{\mathcal{H}_1}} \gamma \hfill \\
\end{gathered}
\end{equation}
is the Rao detector under this special case. It can be seen that (\ref{eq53}) is also a quadratic form in Gaussian variables. Interestingly, the matrix ${\mathbf{P}} = {{\mathbf{I}}_L} - \frac{M}{{L{P_R}}}{\mathbf{X}}_0^H{{\mathbf{X}}_0}$ is a projection matrix, which projects any vector to the null-space of ${{\mathbf{X}}_0^H}$. Therefore, we have
\begin{equation}\label{eq54}
  \operatorname{tr} \left( {{\mathbf{G}}{{\mathbf{X}}_0}{\mathbf{PX}}_0^H{{\mathbf{G}}^H}} \right) = 0, \hfill \\
\end{equation}
which leads to
\begin{equation}\label{eq55}
\operatorname{tr} \left( {{\mathbf{G}}{{\mathbf{X}}_1}{\mathbf{PX}}_1^H{{\mathbf{G}}^H}} \right) \ge  0 = \operatorname{tr} \left( {{\mathbf{G}}{{\mathbf{X}}_0}{\mathbf{PX}}_0^H{{\mathbf{G}}^H}} \right).
\end{equation}
The above equations (\ref{eq54}) and (\ref{eq55}) can be viewed as the hypothesis testing for the noise-free scenario, where we see that the Rao detector (\ref{eq53}) is effective in differentiating the two hypotheses. By adding the Gaussian noise to ${\mathbf{G}}{{\mathbf{X}}_1}$ and ${\mathbf{G}}{{\mathbf{X}}_0}$, it can be inferred that $T_{Rs}\left({{\mathbf{Y}}};\mathcal{H}_1\right) \ge T_{Rs}\left({{\mathbf{Y}}};\mathcal{H}_0\right)$ with a high probability in the high SNR regime, which makes the detector (\ref{eq53}) valid.
\begin{proposition}
$T_{Rs}$ subjects to central and non-central chi-squared distributions under $\mathcal{H}_0$ and $\mathcal{H}_1$, respectively, which are given as
\begin{equation}\label{eq56}
{T_{Rs}} \sim \left\{ \begin{gathered}
  {\mathcal{H}_0}:\mathcal{X}_{K}^2, \hfill \\
  {\mathcal{H}_1}:\mathcal{X}_{K}^2\left( {{\mu}} \right), \hfill \\
\end{gathered}  \right.
\end{equation}
where ${\mu} = \frac{2}{{{N_0}}}\operatorname{tr} \left( {{\mathbf{G}}{{\mathbf{X}}_1}\left( {{{\mathbf{I}}_L} - \frac{M}{{L{P_R}}}{\mathbf{X}}_0^H{{\mathbf{X}}_0}} \right){\mathbf{X}}_1^H{{\mathbf{G}}^H}} \right)$ is the non-centrality parameter, and $K = 2N\left( {L - M} \right)$ represents the DoFs of the distributions.
\end{proposition}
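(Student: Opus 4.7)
The plan is to reduce $T_{Rs}$ to a standard quadratic form in a circularly-symmetric complex Gaussian vector whose kernel matrix is Hermitian and idempotent, and then invoke the classical distributional result for such forms. First, I would emphasize three structural facts about $\mathbf{P} = \mathbf{I}_L - \tfrac{M}{LP_R}\mathbf{X}_0^H\mathbf{X}_0$ already developed in the excerpt around (\ref{eq54}): (i) $\mathbf{P}$ is Hermitian; (ii) $\mathbf{P}^2=\mathbf{P}$, so it is idempotent; and (iii) since $\mathbf{X}_0$ has full row rank $M$ by virtue of (\ref{eq6}), $\mathbf{P}$ is the orthogonal projector onto the $(L-M)$-dimensional null space of $\mathbf{X}_0^H$, and hence $\operatorname{rank}(\mathbf{P})=L-M$ and $\mathbf{X}_0\mathbf{P}=0$.

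Second, I would vectorize the trace via the identity $\operatorname{tr}(\mathbf{Y}\mathbf{P}\mathbf{Y}^H) = \operatorname{vec}(\mathbf{Y})^H(\mathbf{P}^T\otimes\mathbf{I}_N)\operatorname{vec}(\mathbf{Y})$, so that $T_{Rs} = \tfrac{2}{N_0}\mathbf{y}^H\mathbf{A}\mathbf{y}$ with $\mathbf{y}=\operatorname{vec}(\mathbf{Y})$ and $\mathbf{A}=\mathbf{P}^T\otimes\mathbf{I}_N$. Since Hermitianness and idempotency are preserved by both transposition and Kronecker multiplication with an identity, $\mathbf{A}$ inherits both properties from $\mathbf{P}$, and $\operatorname{rank}(\mathbf{A}) = N(L-M)$. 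From the signal model (\ref{eq4}) with ${\mathbf{w}}_l\sim\mathcal{CN}(\mathbf{0},N_0\mathbf{I}_N)$, I have $\mathbf{y}\sim\mathcal{CN}(\boldsymbol{\mu}_i,N_0\mathbf{I}_{NL})$ with $\boldsymbol{\mu}_0=\operatorname{vec}(\mathbf{G}\mathbf{X}_0)$ under $\mathcal{H}_0$ and $\boldsymbol{\mu}_1=\operatorname{vec}(\mathbf{G}\mathbf{X}_1)$ under $\mathcal{H}_1$.

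Third, I would invoke the standard result that for $\mathbf{y}\sim\mathcal{CN}(\boldsymbol{\mu},\sigma^2\mathbf{I})$ and Hermitian idempotent $\mathbf{A}$ of rank $r$, the quantity $(2/\sigma^2)\mathbf{y}^H\mathbf{A}\mathbf{y}$ is non-central chi-squared with $2r$ degrees of freedom and non-centrality $(2/\sigma^2)\boldsymbol{\mu}^H\mathbf{A}\boldsymbol{\mu}$, reducing to central $\chi^2_{2r}$ when $\mathbf{A}\boldsymbol{\mu}=\mathbf{0}$. Applying this to $T_{Rs}$ gives $K = 2N(L-M)$ immediately. To finish, I would compute the two non-centrality parameters: under $\mathcal{H}_0$, $\boldsymbol{\mu}_0^H\mathbf{A}\boldsymbol{\mu}_0 = \operatorname{tr}(\mathbf{G}\mathbf{X}_0\mathbf{P}\mathbf{X}_0^H\mathbf{G}^H) = 0$ by $\mathbf{X}_0\mathbf{P}=0$, yielding a central $\chi^2_{K}$; under $\mathcal{H}_1$, $\boldsymbol{\mu}_1^H\mathbf{A}\boldsymbol{\mu}_1 = \operatorname{tr}(\mathbf{G}\mathbf{X}_1\mathbf{P}\mathbf{X}_1^H\mathbf{G}^H)$, and multiplying by $2/N_0$ recovers exactly the stated $\mu$.

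The main obstacle is bookkeeping rather than substance. Specifically, I must be careful with (a) the transpose-versus-conjugate-transpose in the vectorization identity, verifying that $\mathbf{P}^T$ (and not $\mathbf{P}$) is the correct Kronecker factor; (b) the factor-of-two in the complex versus real chi-squared setup, which arises because each complex Gaussian component contributes two real degrees of freedom; and (c) checking that the Hermitianness of $\mathbf{P}^T$ (which follows from $\mathbf{P}=\mathbf{P}^H$ implying $\mathbf{P}^T=\overline{\mathbf{P}}$ is Hermitian) survives the Kronecker product so the cited distributional theorem applies verbatim. Once these are all confirmed, the identification of both distributions and the non-centrality parameter $\mu$ is immediate.
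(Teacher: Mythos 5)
Your proposal is correct and takes essentially the same route as the paper's Appendix C: both rewrite $T_{Rs}$ as a quadratic form in a circularly-symmetric complex Gaussian vector with a Hermitian idempotent kernel, invoke the standard (non-)central chi-squared result, obtain centrality under $\mathcal{H}_0$ from $\mathbf{X}_0\mathbf{P}=\mathbf{0}$, the stated $\mu$ under $\mathcal{H}_1$, and $K=2N(L-M)$ from the rank of the Kronecker-structured projector. The only cosmetic differences are your vectorization convention ($\operatorname{vec}(\mathbf{Y})$ with kernel $\mathbf{P}^T\otimes\mathbf{I}_N$ versus the paper's $\operatorname{vec}(\mathbf{Y}^H)$ with $\mathbf{I}_N\otimes\mathbf{P}$) and that the paper computes the rank via the idempotent-matrix identity $\operatorname{rank}(\mathbf{P})=\operatorname{tr}(\mathbf{P})$ rather than the projector-dimension argument you use.
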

\begin{proof}
See Appendix C.
\end{proof}
Similar to (\ref{eq50}), the decision error probability at the \emph{i}-th PRI for the special Rao detector (\ref{eq53}) is given by
\begin{equation}\label{eq61}
P_{Rs}^{\left(i\right)} = \left( {1 - {\mathcal{F}_{\mathcal{X}_K^2}}\left( \gamma  \right)} \right)\left( {1 - P_D^{\left(i-1\right)}} \right) + {\mathcal{F}_{\mathcal{X}_K^2\left( \mu  \right)}}\left( \gamma  \right)P_D^{\left(i-1\right)},
\end{equation}
where ${\mathcal{F}_{\mathcal{X}_K^2}}$ and ${\mathcal{F}_{\mathcal{X}_K^2\left( \mu  \right)}}$ are the CDFs of central and non-central chi-squared distributions, respectively.

\subsection{Channel Estimation Performance for NLoS Channels}
As discussed in Sec. IV, there are no closed-form solutions for the estimations of the AoA under the LoS channel. Hence, we only consider the channel estimation performance for the NLoS channel case, where the MSE is used as the performance metric. By denoting the estimated channel as ${\mathbf{\hat G}} = {\mathbf{Y}}{{\mathbf{X}}^H}{\left( {{\mathbf{X}}{{\mathbf{X}}^H}} \right)^{ - 1}}$, the squared error can be given in the form
\begin{equation}\label{eq70}
\begin{gathered}
  \phi  = \left\| {{\mathbf{\hat G}} - {\mathbf{G}}} \right\|_F^2 = \left\| {{\mathbf{Y}}{{\mathbf{X}}^H}{{\left( {{\mathbf{X}}{{\mathbf{X}}^H}} \right)}^{ - 1}} - {\mathbf{G}}} \right\|_F^2 \hfill \\
   = \left\| {{{\left( {{\mathbf{X}}{{\mathbf{X}}^H}} \right)}^{ - 1}}{\mathbf{X}}{{\mathbf{Y}}^H} - {{\mathbf{G}}^H}} \right\|_F^2. \hfill \\
\end{gathered}
\end{equation}
Let us define
\begin{equation}\label{eq71}
\begin{gathered}
  {\mathbf{\bar y}} = \operatorname{vec} \left( {{{\mathbf{Y}}^H}} \right) \sim \mathcal{C}\mathcal{N}\left( {\operatorname{vec} \left( {{{\mathbf{X}}^H}{{\mathbf{G}}^H}} \right),{N_0}{{\mathbf{I}}_{NL}}} \right), \hfill \\
  {\mathbf{T}} = {{\mathbf{I}}_N} \otimes {\left( {{\mathbf{X}}{{\mathbf{X}}^H}} \right)^{ - 1}}{\mathbf{X}},{\mathbf{\bar g}} = \operatorname{vec} \left( {{{\mathbf{G}}^H}} \right). \hfill \\
\end{gathered}
\end{equation}
Then, (\ref{eq70}) can be simplified as
\begin{equation}\label{eq72}
\phi  = {\left\| {{\mathbf{T\bar y}} - {\mathbf{\bar g}}} \right\|^2}.
\end{equation}
Based on basic statistics and linear algebra, we also have
\begin{equation}\label{eq73}
{{\mathbf{y}}_{eq}} \triangleq {\mathbf{T\bar y}} - {\mathbf{\bar g}} \sim \mathcal{C}\mathcal{N}\left( {{\mathbf{0}},{N_0}{\mathbf{T}}{{\mathbf{T}}^H}} \right),
\end{equation}
where
\begin{equation}\label{eq74}
\begin{gathered}
  {\mathbf{T}}{{\mathbf{T}}^H} = {{\mathbf{I}}_N} \otimes {\left( {{\mathbf{X}}{{\mathbf{X}}^H}} \right)^{ - 1}}{\mathbf{X}} \cdot {{\mathbf{I}}_N} \otimes {{\mathbf{X}}^H}{\left( {{\mathbf{X}}{{\mathbf{X}}^H}} \right)^{ - 1}} \hfill \\
   = {{\mathbf{I}}_N} \otimes {\left( {{\mathbf{X}}{{\mathbf{X}}^H}} \right)^{ - 1}}. \hfill \\
\end{gathered}
\end{equation}
Based on the above, the MSE of the channel estimation can be obtained as
\begin{equation}\label{eq75}
\begin{gathered}
  \mathbb{E}\left( \phi  \right) = \mathbb{E}\left( {{{\left\| {{{\mathbf{y}}_{eq}}} \right\|}^2}} \right) = \mathbb{E}\left( {\operatorname{tr} \left( {{{\mathbf{y}}_{eq}}{\mathbf{y}}_{eq}^H} \right)} \right) = \operatorname{tr} \left( {\mathbb{E}\left( {{{\mathbf{y}}_{eq}}{\mathbf{y}}_{eq}^H} \right)} \right) \hfill \\
   = {N_0}\operatorname{tr} \left( {{{\mathbf{I}}_N} \otimes {{\left( {{\mathbf{X}}{{\mathbf{X}}^H}} \right)}^{ - 1}}} \right) = \frac{{{N_0}N}}{L}\operatorname{tr} \left( {{\mathbf{R}}_X^{ - 1}} \right). \hfill \\
\end{gathered}
\end{equation}
In the case that the directional waveform is transmitted, we have ${{\mathbf{R}}_X} = \frac{1}{L}{{\mathbf{X}}_1}{\mathbf{X}}_1^H$. For the orthogonal transmission, the MSE can be given as
\begin{equation}\label{eq76}
\mathbb{E}\left( \phi  \right) = \frac{{{N_0}N}}{L}\operatorname{tr} \left( {{{\left( {\frac{{{P_R}}}{M}{\mathbf{I}}_M} \right)}^{ - 1}}} \right) = \frac{{{N_0}{M^2}N}}{{L{P_R}}}.
\end{equation}
It is clear from (\ref{eq75}) that the MSE is determined by the covariance matrix and the length of the radar waveforms, as well as the antenna number at the BS.

\subsection{Energy Detection for LoS Channels}
In this subsection, we analyze the performance of the energy detector (\ref{eq32}). First of all, let us rewrite (\ref{eq32}) as
\begin{equation}\label{eq62}
\frac{2}{{{N_0}}}\operatorname{tr} \left( {{\mathbf{Y}}{{\mathbf{Y}}^H}} \right) = 2{{\mathbf{\tilde y}}^H}{\mathbf{\tilde y}},
\end{equation}
where ${\mathbf{\tilde y}}\sim\mathcal{C}\mathcal{N}\left( {{\mathbf{d}},{{\mathbf{I}}_{NL}}} \right)$ is given in (\ref{eq42}), with $\mathbf{d}$ being defined as
\begin{equation}\label{eq63}
{\mathbf{d}} = \left\{ \begin{gathered}
  {\mathcal{H}_0}:{{\operatorname{vec} \left( {{\alpha ^*}{\mathbf{X}}_0^H{\mathbf{a}}\left( \theta  \right){{\mathbf{b}}^H}\left( \theta  \right)} \right)} \mathord{\left/
 {\vphantom {{\operatorname{vec} \left( {{\alpha ^*}{\mathbf{X}}_0^H{\mathbf{a}}\left( \theta  \right){{\mathbf{b}}^H}\left( \theta  \right)} \right)} {\sqrt {{N_0}} }}} \right.
 \kern-\nulldelimiterspace} {\sqrt {{N_0}} }}, \hfill \\
  {\mathcal{H}_1}:{{\operatorname{vec} \left( {{\alpha ^*}{\mathbf{X}}_1^H{\mathbf{a}}\left( \theta  \right){{\mathbf{b}}^H}\left( \theta  \right)} \right)} \mathord{\left/
 {\vphantom {{\operatorname{vec} \left( {{\alpha ^*}{\mathbf{X}}_1^H{\mathbf{a}}\left( \theta  \right){{\mathbf{b}}^H}\left( \theta  \right)} \right)} {\sqrt {{N_0}} }}} \right.
 \kern-\nulldelimiterspace} {\sqrt {{N_0}} }}. \hfill \\
\end{gathered}  \right.
\end{equation}
Eq. (\ref{eq62}) is the sum of the squared Gaussian variables, which subjects to the non-central chi-squared distribution \cite{kay1998fundamentals}. Recall the proof of Proposition 3. By replacing the matrix $\mathbf{P}$ in (\ref{eq57}) as the identity matrix ${\mathbf{I}}_L$, we obtain the non-centrality parameters under two hypotheses as
\begin{equation}\label{eq64}
\begin{gathered}
  {\varepsilon _0} = \frac{{2{{\left| \alpha  \right|}^2}}}{{{N_0}}}\operatorname{tr} \left( {{\mathbf{b}}\left( \theta  \right){{\mathbf{a}}^H}\left( \theta  \right){{\mathbf{X}}_0}{\mathbf{X}}_0^H{\mathbf{a}}\left( \theta  \right){{\mathbf{b}}^H}\left( \theta  \right)} \right) \hfill \\
   = \frac{{2{{\left| \alpha  \right|}^2}NL{P_R}}}{{{N_0}}}, \hfill \\
\end{gathered}
\end{equation}
\begin{equation}\label{eq65}
  {\varepsilon _1} = \frac{{2{{\left| \alpha  \right|}^2}}}{{{N_0}}}\operatorname{tr} \left( {{\mathbf{b}}\left( \theta  \right){{\mathbf{a}}^H}\left( \theta  \right){{\mathbf{X}}_1}{\mathbf{X}}_1^H{\mathbf{a}}\left( \theta  \right){{\mathbf{b}}^H}\left( \theta  \right)} \right).
\end{equation}
The DoFs of both distributions are obtained as
\begin{equation}\label{eq66}
\kappa  = 2\operatorname{rank} \left( {{{\mathbf{I}}_{NL}}} \right) = 2NL.
\end{equation}
Given any ${\tilde \eta} \ge {\tilde \gamma} \ge 0$ as the thresholds for the energy detector (\ref{eq32}), it follows that
\begin{equation}\label{eq67}
\frac{1}{L}\operatorname{tr} \left( {{\mathbf{Y}}{{\mathbf{Y}}^H}} \right) \in \left[ {{\tilde \gamma} ,{\tilde \eta} } \right] \Leftrightarrow \frac{2}{{{N_0}}}\operatorname{tr} \left( {{\mathbf{Y}}{{\mathbf{Y}}^H}} \right) \in \left[ {\frac{{2L{\tilde \gamma} }}{{{N_0}}},\frac{{2L{\tilde \eta} }}{{{N_0}}}} \right].
\end{equation}
Let $\displaystyle \gamma  \triangleq \frac{{2L\tilde \gamma }}{{{N_0}}},\eta  \triangleq \frac{{2L\tilde \eta }}{{{N_0}}}$. Under the two hypotheses, the probability that the test statistic does not fall into the decision region can be accordingly given by
\begin{equation}\label{eq68}
\begin{gathered}
  P\left( {{T_E}\left( {\mathbf{Y}} \right) \notin \left[ {\tilde \gamma ,\tilde \eta } \right];{\mathcal{H}_0}} \right) = P\left( {\frac{2}{{{N_0}}}\operatorname{tr} \left( {{\mathbf{Y}}{{\mathbf{Y}}^H}} \right) \notin \left[ {\gamma ,\eta } \right];{\mathcal{H}_0}} \right) \hfill \\
   = 1 - \left( {1 - {\mathcal{F}_{\mathcal{X}_\kappa ^2\left( {{\varepsilon _0}} \right)}}\left( \gamma  \right)} \right){\mathcal{F}_{\mathcal{X}_\kappa ^2\left( {{\varepsilon _0}} \right)}}\left( \eta  \right), \hfill \\
  P\left( {{T_E}\left( {\mathbf{Y}} \right) \in \left[ {\tilde \gamma ,\tilde \eta } \right];{\mathcal{H}_1}} \right) = P\left( {\frac{2}{{{N_0}}}\operatorname{tr} \left( {{\mathbf{Y}}{{\mathbf{Y}}^H}} \right) \in \left[ {\gamma ,\eta } \right];{\mathcal{H}_1}} \right) \hfill \\
   = \left( {1 - {\mathcal{F}_{\mathcal{X}_\kappa ^2\left( {{\varepsilon _1}} \right)}}\left( \gamma  \right)} \right){\mathcal{F}_{\mathcal{X}_\kappa ^2\left( {{\varepsilon _1}} \right)}}\left( \eta  \right). \hfill \\
\end{gathered}
\end{equation}
Finally, at the \emph{i}-th PRI, the decision error probability for the energy detector is
\begin{equation}\label{eq69}
\begin{gathered}
  P_E^{\left(i\right)} = \left[ {1 - \left( {1 - {\mathcal{F}_{\mathcal{X}_\kappa ^2\left( {{\varepsilon _0}} \right)}}\left( \gamma  \right)} \right){\mathcal{F}_{\mathcal{X}_\kappa ^2\left( {{\varepsilon _0}} \right)}}\left( \eta  \right)} \right]\left( {1 - P_D^{\left(i-1\right)}} \right) \hfill \\
  \;\;\;\;\;\;\;\;\;\;\;\;\;\;\;\;\;\;\; + \left( {1 - {\mathcal{F}_{\mathcal{X}_\kappa ^2\left( {{\varepsilon _1}} \right)}}\left( \gamma  \right)} \right){\mathcal{F}_{\mathcal{X}_\kappa ^2\left( {{\varepsilon _1}} \right)}}\left( \eta  \right)P_D^{\left(i-1\right)}. \hfill \\
\end{gathered}
\end{equation}

\subsection{Discussion on the Hypothesis Testing Thresholds}
It is worth highlighting that the performance of all the detectors above relies on the given thresholds. Typically, the threshold is chosen to optimize certain performance metrics, i.e., the decision error probability in our case. Note that the GLRT detector is equivalent to the maximum likelihood ratio. Hence the optimal threshold can be straightforwardly given as $\displaystyle \gamma = \ln \left( {\frac{{1 - P_D^{\left(i-1\right)}}}{{P_D^{\left(i-1\right)}}}} \right)$. Nevertheless, as the true value of ${P_D^{\left(i-1\right)}}$ is unknown to the BS, only the suboptimal threshold $\gamma = 0$ can be adopted.
\\\indent For the Rao and energy detectors, the BS is unable to determine the optimal hypothesis testing thresholds, since it does not know the tracking waveform $\mathbf{X}_1$ under such scenarios. Therefore, the hypothesis testing thresholds can only be obtained by numerical simulations. We address this issue in the next section.

\section{Numerical Results}
In this section, numerical results are provided to verify the effectiveness of the proposed approaches. Below we introduce the parameters used in our simulations.
\begin{enumerate}
\item {\textbf {Radar Waveforms:}} We use $ {{\mathbf{X}}_0} = \sqrt {\frac{{L{P_R}}}{M}} {\mathbf{U}}$ as the radar searching waveform, where ${\mathbf{U}}\in \mathbb{C}^{M\times L}$ is an arbitrarily given unitary matrix. For the tracking waveform ${\mathbf{X}}_1$, we firstly solve the classic 3dB beampattern design problem to obtain the waveform covariance matrix ${\mathbf{R}}\in \mathbb{C}^{M\times M}$, which is \cite{4350230}
    \begin{equation}\label{eq_opt}
    \begin{gathered}
    \mathop {\min }\limits_{t,{\mathbf{R}}} {\kern 1pt} {\kern 1pt} {\kern 1pt}  - t \hfill \\
    s.t.\;\;\;{{\mathbf{a}}^H}\left( {{\theta _0}} \right){\mathbf{Ra}}\left( {{\theta _0}} \right) - {{\mathbf{a}}^H}\left( {{\theta _m}} \right){\mathbf{Ra}}\left( {{\theta _m}} \right) \ge t,\forall {\theta _m} \in \Psi,  \hfill \\
    \;\;\;\;\;\;\;{\kern 1pt} {\kern 1pt} {{\mathbf{a}}^H}\left( {{\theta _1}} \right){\mathbf{Ra}}\left( {{\theta _1}} \right) = {{\mathbf{a}}^H}\left( {{\theta _0}} \right){\mathbf{Ra}}\left( {{\theta _0}} \right)/2, \hfill \\
    \;\;\;\;\;\;\;\;{{\mathbf{a}}^H}\left( {{\theta _2}} \right){\mathbf{Ra}}\left( {{\theta _2}} \right) = {{\mathbf{a}}^H}\left( {{\theta _0}} \right){\mathbf{Ra}}\left( {{\theta _0}} \right)/2,\;\; \hfill \\
    \;\;\;\;\;\;\;\;{\mathbf{R}} \succeq 0,{\mathbf{R}} = {{\mathbf{R}}^H}, \hfill \\
    \;\;\;\;\;\;\;\;\operatorname{diag}\left( {\mathbf{R}} \right) = \frac{{{P_R}{\mathbf{1}}}}{M}, \hfill \\
    \end{gathered}
    \end{equation}
    where $\theta_0$ denotes the azimuth angle of the target, i.e., the location of the radar's mainlobe, whose 3dB beamwidth is determined by $\left(\theta_2-\theta_1\right)$, and $\Psi$ stands for the sidelobe region. According to \cite{4350230}, problem (\ref{eq_opt}) is convex, and can be easily solved via numerical tools. The tracking beampattern generated by (\ref{eq_opt}) can accurately achieve the desired 3dB beamwidth, while maintaining the minimum sidelobe level. We then obtain the tracking waveform $ {{\mathbf{X}}_1}$ by the Cholesky decomposition of $ {{\mathbf{R}}}$. Without loss of generality, we assume that the mainlobe focuses on the angle of $0^{\circ}$, and the desired 3dB beamwidth is $10^{\circ}$.
\item {\textbf {Threshold Setting:}} For the GLRT detectors, we consider both the optimal threshold $\gamma = \ln \left( {\frac{{1 - P_D^{\left(i-1\right)}}}{{P_D^{\left(i-1\right)}}}} \right)$ and its suboptimal counterpart $\gamma = 0$. Since the optimal threshold for Rao test is difficult to obtain, we provide the ergodic empirical thresholds, which are computed by Monte Carlo simulations with a large number of channel realizations, and can guarantee that the \emph{average error probability} is minimized. Meanwhile, we also compute the optimal threshold that corresponds to one single channel realization for $M = N$, where the theoretical error probability is given in (\ref{eq61}). Note that such a threshold is not obtainable in practical scenarios, as it requires the BS to know the channel a priori. In our simulations, it serves as the performance benchmark for the Rao test. For the energy detector under the LoS channel, the empirical thresholds are simply given as $\gamma  = N\left( {\frac{P_R}{2} + {N_0}} \right),\eta  = N\left( {2{P_R} + {N_0}} \right)$, while the performance of the optimal thresholds for one single channel realization is also presented for comparison.
\item {\textbf {Other Parameters:}} For simplicity, we assume that the detection probability of radar is the same at each PRI, namely $P_D^i = P_D, \forall i$. Without loss of generality, we set $P_R = 1$, and define the transmit SNR of radar as $\operatorname{SNR} = P_R/N_0$. Unless otherwise specified, we fix $L = 20$, and assume half-wavelength separation between adjacent antennas.
\end{enumerate}
\subsection{NLoS Channel Scenario}
In this subsection, we assume a Rayleigh fading channel $\mathbf{G}$, i.e., the entries of $\mathbf{G}$ are independent and identically distributed (i.i.d.) and subject to the standard complex Gaussian distribution. We firstly consider the case that $M = N = 16, L = 20, P_D = 0.9$. To understand the impact of the ergodic HT thresholds on the performance of the Rao test, Fig. 4 shows the decision error probability computed through Monte Carlo simulations for increasing values of the HT thresholds. It can be observed that, for each SNR value, the error probability curve has a unique minimum point, which determines the ergodic threshold for the detector. We then use these results for the following Rao test simulations.
\begin{figure}[!htp]
 \centering
 \includegraphics[width=\columnwidth]{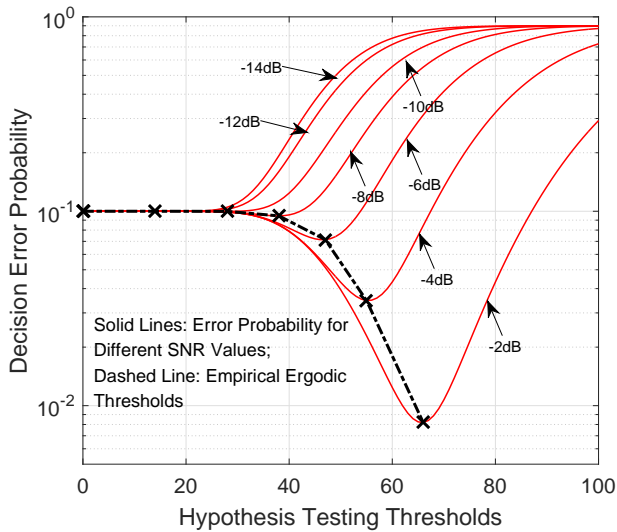}
 %\hspace{2cm}
 \caption{Decision error probability of the Rao test for varying HT thresholds $\gamma$. $M = N = 16, L = 20, P_D = 0.9$.}
 \label{fig:FIG24}
\end{figure}
\\\indent In Fig. 5, the performances of the GLRT and the Rao test are compared under the same parameter configuration of Fig. 4, where the theoretical and simulated curves are denoted by solid and dashed lines, respectively. For GLRT, we employ both the optimal and suboptimal thresholds mentioned above. For the Rao test, we investigate not only the empirical thresholds shown in Fig. 4, but also the optimal thresholds for the specific instantaneous channel realization. It can be noted that the theoretical curves match well with their simulated counterparts for both detectors, which validates our performance analysis of (\ref{eq50}) and (\ref{eq61}) in Sec. V. Moreover, the Rao detector outperforms the GLRT in the low SNR regime, where the associated error probability is close to 0.1. The reason for this is explained as follows. In light of Fig. 4, the optimal threshold for Rao test is close to 0 when the SNR is low. Due to the non-negativity of the Rao test statistic (\ref{eq53}), hypothesis $\mathcal{H}_1$ will always be chosen by the detector, which has the prior probability of $P\left( {{\mathcal{H}_1}} \right) = {P_D} = 0.9$, leading to an error probability of 0.1. It can be further noted that the GLRT statistic (\ref{eq41}) can be either positive or negative. When the SNR is low, the GLRT detector choose randomly from the two hypotheses, resulting in an error probability of 0.5. At the high SNR regime, however, GLRT outperforms the Rao detector, as it employs the information of both $\mathbf{X}_0$ and $\mathbf{X}_1$.
\begin{figure}[!htp]
 \centering
 \includegraphics[width=\columnwidth]{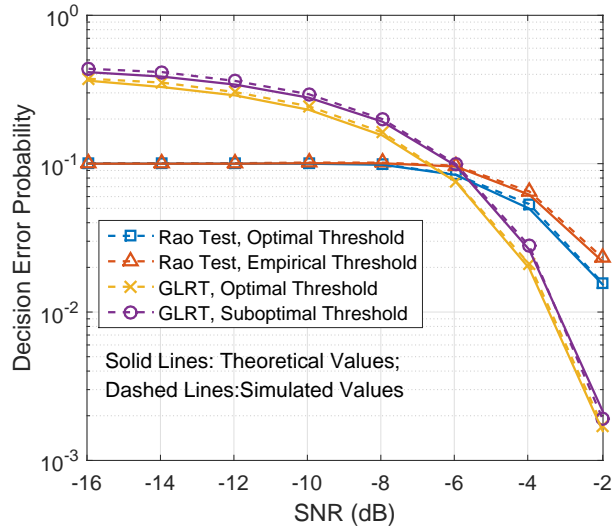}
 %\hspace{2cm}
 \caption{Decision error probability vs. SNR for the GLRT and Rao tests. $M = N = 16, L = 20, P_D = 0.9$.}
 \label{fig:FIG25}
\end{figure}
\\\indent We further show in Fig. 6 the detection performance for $P_D = 0.5$, where we fix $N = 16$, and set $M = 10$ and $M = 16$ respectively. Note that the optimal and the suboptimal thresholds for GLRT are exactly the same, given the prior probability of $0.5$ for each hypothesis. For the Rao test, since the analytical performance for the nonequal-antenna case is intractable, we only show the performance with empirical threshold for $M = 16$. It can be observed that, when $M = 10$, the performance for both detectors are superior to that of the case of $M =16$, which is sensible given that the BS exploits more DoFs for hypothesis testing in the former case. In addition, the GLRT outperforms the Rao test for both low and high SNR regimes. This is because the priori probability for $\mathcal{H}_1$ is now 0.5, leading to an error probability of 0.5 for Rao test for the low SNR regime, which further verifies the correctness of our observations in the analysis of Fig. 5.
\begin{figure}[!htp]
 \centering
 \includegraphics[width=\columnwidth]{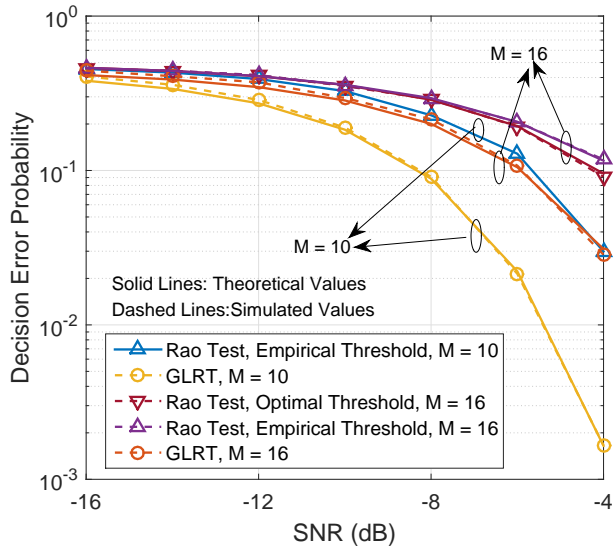}
 %\hspace{2cm}
 \caption{Decision error probability vs. SNR for the GLRT and Rao tests. $M = \left\{10, 16\right\}, N = 16, L = 20, P_D = 0.5$.}
 \label{fig:FIG26}
\end{figure}
\\\indent We investigate the channel estimation performance in Fig. 7, where we fix the radar antenna number as $M = 5$, and increase the BS antennas from $N = 4$ to $N = 20$. Note that the hypothesis testing exploits the power of all the entries in the received signal matrix to make the binary decision, which does not require a high SNR per entry to guarantee a successful outcome. This is very similar to the concept of diversity gain. Nevertheless, for the NLoS channel estimation, we need to estimate each entry individually, where the diversity gain does not exist. For this reason, we fix the SNR at 15dB to achieve the normal estimation performance. It can be seen from Fig. 7 that the theoretical curves match well with the simulated ones, which proves the correctness of (\ref{eq75}) and (\ref{eq76}). Secondly, the MSE increases with the rise of the BS antenna number, owing to the increasing number of the matrix entries to be estimated. Finally, it is worth highlighting that better estimation performance can be achieved by use of the searching waveform $\mathbf{X}_0$ rather than the tracking waveform $\mathbf{X}_1$. This is because the optimal pilot signals are orthogonal waveforms such as $\mathbf{X}_0$ according to the channel estimation theory \cite{1200150}.
\begin{figure}[!htp]
 \centering
 \includegraphics[width=\columnwidth]{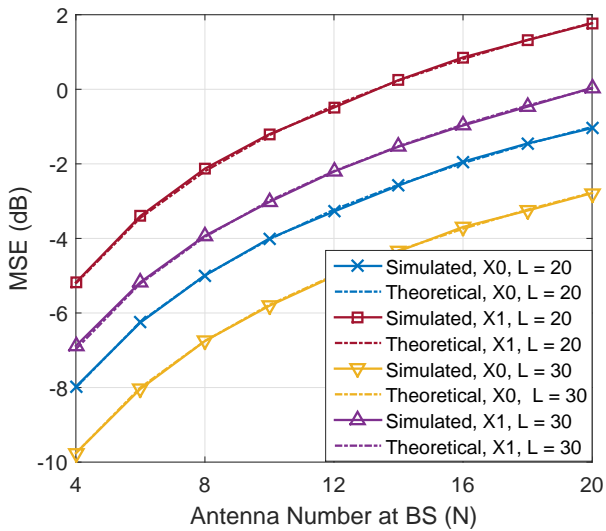}
 %\hspace{2cm}
 \caption{Channel estimation MSE vs. number of antennas at the BS. $M = 5, \text{SNR} = 15\text{dB}$.}
 \label{fig:FIG27}
\end{figure}
\subsection{LoS Channel Scenario}
In this subsection, we show the numerical results for the LoS channel scenario. Unless otherwise specified, we assume that the BS is located at $\theta = 20^{\circ}$ relative to the radar. In each Monte Carlo simulation, a unit-modulus path-loss factor $\alpha$ is randomly generated.
\\\indent We first look at the detection performance of GLRT and ED in Fig. 8 with $M = N = 16, L = 20, P_D = 0.9$. For simplicity, we use ``ED" to refer to the energy detection in Fig. 8. Again, we observe that the theoretical curves match well with their simulated counterparts. It is interesting to see that the energy detector outperforms the GLRT detector under high SNR regime. This is a counter-intuitive behavior, as the GLRT exploits both $\mathbf{X}_0$ and $\mathbf{X}_1$ while the energy detector requires nothing from the radar. However, this result can be explained by realizing that the performance of GLRT is highly dependent on the information contained in the received signals. Specifically, since the LoS channel projects the received signal matrix onto a rank-1 subspace, this breaks down the structure of the transmitted waveforms. In contrast, the energy detection exploits the difference between the two beampatterns, which is equivalent to utilizing the intrinsic structure of $\mathbf{X}_0$ and $\mathbf{X}_1$, and hence leads to better performance.
\begin{figure}[!htp]
 \centering
 \includegraphics[width=\columnwidth]{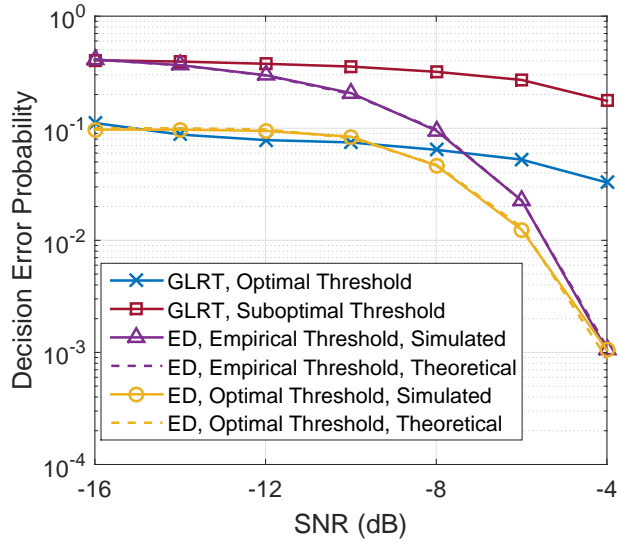}
 %\hspace{2cm}
 \caption{Decision error probability vs. SNR for the GLRT and energy detection HT under a LoS channel. $M = N = 16, L = 20, P_D = 0.9$.}
 \label{fig:FIG28}
\end{figure}
\begin{figure}[!htp]
 \centering
 \includegraphics[width=\columnwidth]{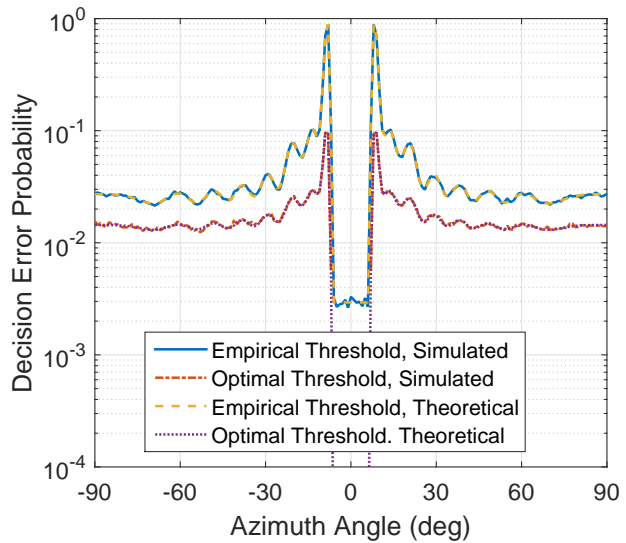}
 %\hspace{2cm}
 \caption{Decision error probability vs. azimuth angle for the energy detection HT under a LoS channel. $M = N = 16, L = 20, P_D = 0.9, \text{SNR} = -6\text{dB}$.}
 \label{fig:FIG29}
\end{figure}
\\\indent As discussed above, the ED exploits the difference between the omnidirectional and directive beampatterns, in which case the performance of the energy detector relies on the angle of the BS relative to the radar. We therefore show in Fig. 9 the decision error probability at the BS by varying its azimuth angle $\theta$, where the SNR is set as $-6\text{dB}$, and the detectors with both optimal and empirical thresholds are considered. Interestingly, all of the curves in the figure show a shape similar to that of the tracking beampattern in Fig. 3. This is because the detection performance of ED is mainly determined by the power gap between the two beampatterns. In the mainlobe area, we see that the error performance is better than that of the other areas, owing to the largest power gap within omnidirectional and directive antenna patterns in this region, as illustrated in Fig. 3. Finally, as predicted in Sec. IV-B, the detection performance becomes worse if the BS is located at an angle that falls into the ambiguity region, where the two beampatterns are unable to be effectively differentiated. Fortunately, the BS is unlikely to be located in such area since the it only occupies a small portion of the whole space.
\begin{figure}[!htp]
 \centering
 \includegraphics[width=\columnwidth]{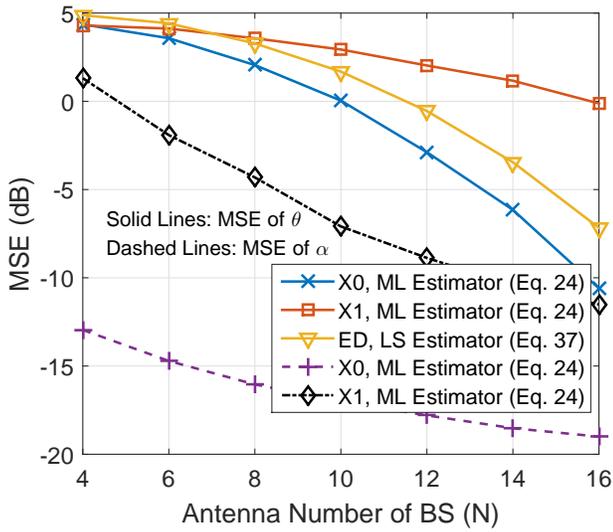}
 %\hspace{2cm}
 \caption{Channel estimation MSE vs. number of antennas at the BS for LoS scenario, $M = 4, L = 20, \text{SNR} = -6\text{dB}$.}
 \label{fig:FIG210}
\end{figure}
%\begin{figure}[!htp]
% \centering
% \includegraphics[width=\columnwidth]{fig211.eps}
% %\hspace{2cm}
% \caption{Results for estimating $\alpha$ under the LoS channel, $M = 4, L = 20, \text{SNR} = -6\text{dB}$.}
% \label{fig:FIG211}
%\end{figure}
\\\indent Fig. 10 shows the channel estimation performance for the LoS scenario with an increasing number of BS antennas, where $M = 4, L = 20, \text{SNR} = -6\text{dB}$. In this figure, the maximum likelihood (ML) and the least-squares (LS) estimators (\ref{eq24}) and (\ref{eq37}) are employed for the cases of known and unknown waveforms, respectively. In contrast to the NLoS channel shown in Fig. 7, Fig. 10 illustrates that the MSE of both the estimated $\theta$ and $\alpha$ decreases with the increase of the BS antennas under the LoS channel. This is because $\theta$ and $\alpha$ are the only two parameters to be estimated, which can be more accurately obtained by increasing the DoFs at the BS. It can be again observed that the accuracy of $\mathbf{X}_0$ is superior to that of $\mathbf{X}_1$ when the ML estimator is used, thanks to the orthogonal nature of the searching waveform. Nevertheless, we still need to identify the working mode of the radar before we can estimate the channel parameters. Moreover, there exists a 3dB performance gap between the LS estimator and the ML estimator using $\mathbf{X}_0$. This is because the LS estimator (\ref{eq24}) is solely based on the searching waveform $\mathbf{X}_0$, which is definitely worse than the associated ML estimator, as the latter is typically the optimal estimator in a statistical sense. Even so, the performance of the LS estimator is satisfactory enough, as it does not require any information of the radar waveforms.

\section{Conclusions}
This paper deals with the issue of interfering channel estimation for radar and cellular coexistence, where we assume that the radar switches randomly between the searching and tracking modes, and the BS is attempting to estimate the radar-cellular interfering channel by use of the radar probing waveforms. To acquire the channel state information, the BS firstly identifies the working mode of the radar by use of hypothesis testing approaches, and then estimates the channel parameters. For completeness, both the LoS and NLoS channels are considered, where different detectors are proposed as per the available priori knowledge at the BS, namely GLRT, Rao test and energy detection. As a step further, the theoretical performance of the proposed approaches are analyzed in detail using statistical techniques. Our simulations show that the theoretical curves match well with the numerical results, and that the BS can effectively estimate the interfering channel, even with limited information from the radar.

% if have a single appendix:
%\appendix[Proof of the Zonklar Equations]
% or
%\appendix  % for no appendix heading
% do not use \section anymore after \appendix, only \section*
% is possibly needed

% use appendices with more than one appendix
% then use \section to start each appendix
% you must declare a \section before using any
% \subsection or using \label (\appendices by itself
% starts a section numbered zero.)
%

 \appendices
\section{Proof of Proposition 1}
In the LoS channel case, the logarithmic probability density function (log-PDF) of the received signal matrix can be given as
\begin{equation}\label{eq77}
\begin{small}
\begin{gathered}
  \ln p\left({\mathbf{Y}}\right) =  - NL\ln \pi {N_0} \hfill \\
   - \frac{1}{{{N_0}}}\operatorname{tr} \left( {{{\left( {{\mathbf{Y}} - \alpha {\mathbf{b}}\left( \theta  \right){{\mathbf{a}}^H}\left( \theta  \right){\mathbf{X}}} \right)}^H}\left( {{\mathbf{Y}} - \alpha {\mathbf{b}}\left( \theta  \right){{\mathbf{a}}^H}\left( \theta  \right){\mathbf{X}}} \right)} \right).
\end{gathered}
\end{small}
\end{equation}
According to \cite{kay1998fundamentals}, the FIM can be partitioned as
\begin{equation}\label{eq78}
{\mathbf{J}}\left( {\mathbf{\Theta }} \right) = \left[ {\begin{array}{*{20}{c}}
  {{{\mathbf{J}}_{rr}}}&{{{\mathbf{J}}_{rs}}} \\
  {{{\mathbf{J}}_{sr}}}&{{{\mathbf{J}}_{ss}}}
\end{array}} \right],
\end{equation}
where
\begin{equation}\label{eq79}
\begin{gathered}
  {{\mathbf{J}}_{rr}} = \mathbb{E}\left( {\frac{{\partial \ln p}}{{\partial {{\operatorname{vec} }^{\text{*}}}\left( {\mathbf{X}} \right)}}\frac{{\partial \ln p}}{{\partial {{\operatorname{vec} }^T}\left( {\mathbf{X}} \right)}}} \right) \hfill \\
   = \frac{{4N{{\left| \alpha  \right|}^2}}}{{{N_0}}}{{\mathbf{I}}_L} \otimes {{\mathbf{a}}^*}\left( \theta  \right){{\mathbf{a}}^T}\left( \theta  \right) \in {\mathbb{C}^{ML \times ML}}. \hfill \\
\end{gathered}
\end{equation}
Let ${{\bm{\theta }}_s} = {\left[ {\alpha ,\theta } \right]^T} \in {\mathbb{C}^{2 \times 1}}$ be the nuisance parameters, then
\begin{equation}\label{eq80}
\begin{gathered}
  {{\mathbf{J}}_{rs}} = \mathbb{E}\left( {\frac{{\partial \ln p}}{{\partial {{\operatorname{vec} }^{\text{*}}}\left( {\mathbf{X}} \right)}}{{\left( {\frac{{\partial \ln p}}{{\partial {{\bm{\theta }}_s}}}} \right)}^T}} \right) \in {\mathbb{C}^{ML \times 2}}, \hfill \\
  {{\mathbf{J}}_{sr}} = {\mathbf{J}}_{rs}^H \in {\mathbb{C}^{2 \times ML}}, \hfill \\
  {{\mathbf{J}}_{ss}} = \mathbb{E}\left( {\frac{{\partial \ln p}}{{\partial {\bm{\theta }}_s^*}}{{\left( {\frac{{\partial \ln p}}{{\partial {{\bm{\theta }}_s}}}} \right)}^T}} \right) \in {\mathbb{C}^{2 \times 2}}. \hfill \\
\end{gathered}
\end{equation}
From (\ref{eq80}) and (\ref{eq81}), it can be observed that
\begin{equation}\label{eq81}
\begin{gathered}
  \operatorname{rank} \left( {{{\mathbf{J}}_{rr}}} \right) = L, \hfill \\
  \operatorname{rank} \left( {{{\mathbf{J}}_{rs}}} \right) \le 2, \operatorname{rank} \left( {{{\mathbf{J}}_{sr}}} \right) \le 2, \operatorname{rank} \left( {{{\mathbf{J}}_{ss}}} \right) \le 2. \hfill \\
\end{gathered}
\end{equation}
To compute the upper-left partition of the inverse FIM, let us define
\begin{equation}\label{eq82}
{\mathbf{\bar J}} = {{\mathbf{J}}_{rr}}\left( {{\mathbf{\tilde \Theta }}} \right) - {{\mathbf{J}}_{rs}}\left( {{\mathbf{\tilde \Theta }}} \right){\mathbf{J}}_{ss}^{ - 1}\left( {{\mathbf{\tilde \Theta }}} \right){{\mathbf{J}}_{sr}}\left( {{\mathbf{\tilde \Theta }}} \right).
\end{equation}
By using the property of the rank operator, and recalling that $L \ge M >2$, we have
\begin{equation}\label{eq83}
\operatorname{rank} \left( {\mathbf{\bar J}} \right) \le L + 2 < ML,
\end{equation}
which indicates that ${\mathbf{\bar J}}\in \mathbb{C}^{ML \times ML}$ is a singular matrix and is thus non-invertible. Hence, the Rao test statistic does not exist. This completes the proof.

\section{Proof of Proposition 2}
In the NLoS channel case, the log-PDF can be given as
 \begin{equation}\label{eq84}
\ln p =  - NL\ln \pi {N_0} - \frac{1}{{{N_0}}}\operatorname{tr} \left( {{{\left( {{\mathbf{Y}} - {\mathbf{GX}}} \right)}^H}\left( {{\mathbf{Y}} - {\mathbf{GX}}} \right)} \right).
 \end{equation}
 To compute the Fisher Information, we calculate the derivatives as
\begin{equation}\label{eq85}
\begin{small}
\begin{gathered}
  \frac{{\partial \ln p}}{{\partial \operatorname{vec} \left( {\mathbf{X}} \right)}} = \frac{2}{{{N_0}}}\left( {{{\mathbf{I}}_L} \otimes {{\mathbf{G}}^H}} \right){\mathbf{z}},\frac{{\partial \ln p}}{{\partial {{\operatorname{vec} }^{\text{*}}}\left( {\mathbf{X}} \right)}} = \frac{2}{{{N_0}}}\left( {{{\mathbf{I}}_L} \otimes {{\mathbf{G}}^T}} \right){{\mathbf{z}}^*}, \hfill \\
  \frac{{\partial \ln p}}{{\partial \operatorname{vec} \left( {\mathbf{G}} \right)}} = \frac{2}{{{N_0}}}\left( {{{\mathbf{X}}^*} \otimes {{\mathbf{I}}_N}} \right){\mathbf{z}},\frac{{\partial \ln p}}{{\partial {{\operatorname{vec} }^*}\left( {\mathbf{G}} \right)}} = \frac{2}{{{N_0}}}\left( {{\mathbf{X}} \otimes {{\mathbf{I}}_N}} \right){{\mathbf{z}}^*}, \hfill \\
\end{gathered}
\end{small}
\end{equation}
where ${\mathbf{z}} = \operatorname{vec} \left( {{\mathbf{Y}} - {\mathbf{GX}}} \right)$.  Recalling (\ref{eq78})-(\ref{eq80}), and by using the fact that $\mathbb{E}\left( {{{\mathbf{z}}^*}{{\mathbf{z}}^T}} \right) = {N_0}{{\mathbf{I}}_{NL}}$, we have
\begin{equation}\label{eq86}
\begin{gathered}
  {{\mathbf{J}}_{rr}} = \mathbb{E}\left( {\frac{{\partial \ln p}}{{\partial {{\operatorname{vec} }^{\text{*}}}\left( {\mathbf{X}} \right)}}\frac{{\partial \ln p}}{{\partial {{\operatorname{vec} }^T}\left( {\mathbf{X}} \right)}}} \right) \hfill \\
   = \frac{4}{{{N_0^2}}}\left( {{{\mathbf{I}}_L} \otimes {{\mathbf{G}}^T}} \right)\mathbb{E}\left( {{{\mathbf{z}}^*}{{\mathbf{z}}^T}} \right)\left( {{{\mathbf{I}}_L} \otimes {{\mathbf{G}}^*}} \right) \hfill \\
   = \frac{4}{{{N_0}}}{{\mathbf{I}}_L} \otimes {{\mathbf{G}}^T}{{\mathbf{G}}^*}, \hfill \\
\end{gathered}
\end{equation}
\begin{equation}\label{eq87}
{{\mathbf{J}}_{rs}} = \mathbb{E}\left( {\frac{{\partial \ln p}}{{\partial {{\operatorname{vec} }^{\text{*}}}\left( {\mathbf{X}} \right)}}\frac{{\partial \ln p}}{{\partial {{\operatorname{vec} }^T}\left( {\mathbf{G}} \right)}}} \right) = \frac{4}{{{N_0}}}{{\mathbf{X}}^H} \otimes {{\mathbf{G}}^T},
\end{equation}
\begin{equation}\label{eq88}
{{\mathbf{J}}_{sr}} = {\mathbf{J}}_{rs}^H = \frac{4}{{{N_0}}}{\mathbf{X}} \otimes {{\mathbf{G}}^*},
\end{equation}
\begin{equation}\label{eq89}
{{\mathbf{J}}_{ss}} = \mathbb{E}\left( {\frac{{\partial \ln p}}{{\partial {{\operatorname{vec} }^{\text{*}}}\left( {\mathbf{G}} \right)}}\frac{{\partial \ln p}}{{\partial {{\operatorname{vec} }^T}\left( {\mathbf{G}} \right)}}} \right) = \frac{4}{{{N_0}}}{\mathbf{X}}{{\mathbf{X}}^H} \otimes {{\mathbf{I}}_N}.
\end{equation}
The FIM can be therefore expressed as
\begin{equation}\label{eq90}
{\mathbf{J}}\left( {\mathbf{\Theta }} \right) = \frac{4}{{{N_0}}}\left[ {\begin{array}{*{20}{c}}
  {{{\mathbf{I}}_L} \otimes {{\mathbf{G}}^T}{{\mathbf{G}}^*}}&{{{\mathbf{X}}^H} \otimes {{\mathbf{G}}^T}} \\
  {{\mathbf{X}} \otimes {{\mathbf{G}}^*}}&{{\mathbf{X}}{{\mathbf{X}}^H} \otimes {{\mathbf{I}}_N}}
\end{array}} \right].
\end{equation}
By recalling the definition of ${{\mathbf{\tilde \Theta }}}$, and noting that ${{\mathbf{X}}_0}{\mathbf{X}}_0^H = \frac{{L{P_R}}}{M}{{\mathbf{I}}_M} \triangleq \rho {{\mathbf{I}}_M}$, we have
\begin{equation}\label{eq91}
\begin{small}
\begin{gathered}
  {\left[ {{{\mathbf{J}}^{ - 1}}\left( {{\mathbf{\tilde \Theta }}} \right)} \right]_{{{\mathbf{\theta }}_r}{{\mathbf{\theta }}_r}}}\hfill \\
   = {\left( {{{\mathbf{J}}_{rr}}\left( {{\mathbf{\tilde \Theta }}} \right) - {{\mathbf{J}}_{rs}}\left( {{\mathbf{\tilde \Theta }}} \right){\mathbf{J}}_{ss}^{ - 1}\left( {{\mathbf{\tilde \Theta }}} \right){{\mathbf{J}}_{sr}}\left( {{\mathbf{\tilde \Theta }}} \right)} \right)^{ - 1}} \hfill \\
   = \frac{{{N_0}}}{4}{\left( {{{\mathbf{I}}_L} \otimes {\mathbf{\hat G}}_0^T{\mathbf{\hat G}}_0^* - \frac{1}{\rho }\left( {{\mathbf{X}}_0^H \otimes {\mathbf{\hat G}}_0^T} \right){{\mathbf{I}}_{MN}}\left( {{{\mathbf{X}}_0} \otimes {\mathbf{\hat G}}_0^*} \right)} \right)^{ - 1}} \hfill \\
   = \frac{{{N_0}}}{4}{\left( {\left( {{{\mathbf{I}}_L} - \frac{1}{\rho }{\mathbf{X}}_0^H{{\mathbf{X}}_0}} \right) \otimes \left( {{\mathbf{\hat G}}_0^T{\mathbf{\hat G}}_0^*} \right)} \right)^{ - 1}}, \hfill \\
\end{gathered}
\end{small}
\end{equation}
where $\rho  = \frac{{L{P_R}}}{M}$, and ${\mathbf{\hat G}}_0$ is given by (\ref{eq13}). By using (\ref{eq13}), (\ref{eq18}), (\ref{eq85}), and (\ref{eq91}), the Rao test statistic can be expressed as (\ref{eq51}), which completes the proof.

\section{Proof of Proposition 3}
We first rewrite (\ref{eq53}) as
\begin{equation}\label{eq57}
{T_{Rs}}\left({{\mathbf{Y}}}\right) = \frac{2}{{{N_0}}}\operatorname{tr} \left( {{\mathbf{YP}}{{\mathbf{Y}}^H}} \right) = 2{{{\mathbf{\tilde y}}}^H}\left( {{{\mathbf{I}}_N} \otimes {\mathbf{P}}} \right){\mathbf{\tilde y}},
\end{equation}
where ${\mathbf{\tilde y}}$ is defined in (\ref{eq42}). In this expression, both the real and imaginary parts of $\sqrt{2}{\mathbf{\tilde y}}$ subject to the standard normal distribution. Since ${{{\mathbf{I}}_N} \otimes {\mathbf{P}}}$ is also an idempotent matrix, (\ref{eq57}) subjects to non-central chi-squared distribution under both hypotheses \cite{kay1998fundamentals}. Under $\mathcal{H}_0$, the non-centrality parameter is given by
\begin{equation}\label{eq58}
{\mu _0} = \frac{2}{{{N_0}}}\operatorname{tr} \left( {{\mathbf{G}}{{\mathbf{X}}_0}\left( {{{\mathbf{I}}_L} - \frac{M}{{L{P_R}}}{\mathbf{X}}_0^H{{\mathbf{X}}_0}} \right){\mathbf{X}}_0^H{{\mathbf{G}}^H}} \right) = 0,
\end{equation}
which indicates that $T_{Rs}\left({{\mathbf{Y}}};\mathcal{H}_0\right)$ is in fact central chi-squared distributed. Under $\mathcal{H}_1$, the non-centrality parameter is given as
\begin{equation}\label{eq59}
{\mu} = \frac{2}{{{N_0}}}\operatorname{tr} \left( {{\mathbf{G}}{{\mathbf{X}}_1}\left( {{{\mathbf{I}}_L} - \frac{M}{{L{P_R}}}{\mathbf{X}}_0^H{{\mathbf{X}}_0}} \right){\mathbf{X}}_1^H{{\mathbf{G}}^H}} \right).
\end{equation}
The DoFs of the two distributions are given by
\begin{equation}\label{eq60}
\begin{gathered}
  K  = 2\operatorname{rank} \left( {{{\mathbf{I}}_N} \otimes {\mathbf{P}}} \right) \hfill \\
   = 2N\operatorname{rank} \left( {\mathbf{P}} \right) = 2N\operatorname{tr} \left( {\mathbf{P}} \right) = 2N\left( {L - M} \right),\hfill \\
\end{gathered}
\end{equation}
where we use the property of the idempotent matrix that $\operatorname{rank}\left( {\mathbf{P}} \right) = \operatorname{tr} \left( {\mathbf{P}} \right)$. This completes the proof.

% % you can choose not to have a title for an appendix
% % if you want by leaving the aRiemannian gradientument blank
% \section{}
% Appendix two text goes here.

% use section* for acknowledgment

% Can use something like this to put references on a page
% by themselves when using endfloat and the captionsoff option.
\ifCLASSOPTIONcaptionsoff
  \newpage
\fi

% trigger a \newpage just before the given reference
% number - used to balance the columns on the last page
% adjust value as needed - may need to be readjusted if
% the document is modified later
%\IEEEtriggeratref{8}
% The "triggered" command can be changed if desired:
%\IEEEtriggercmd{\enlaRiemannian gradientethispage{-5in}}

% references section

\bibliographystyle{IEEEtran}
% argument is your BibTeX string definitions and bibliography database(s)
\bibliography{IEEEabrv,RadCom_Ref}

\end{document}